\title{Notes on presheaf representations of strategies \\
and cohomological refinements of $k$-consistency and $k$-equivalence}
\author{Samson Abramsky}
\address{Department of Computer Science, University College London, 66-72 Gower St., London WC1E 6EA, U.K.}\email{s.abramsky@ucl.ac.uk}
\begin{document}

\begin{abstract}
In this note, we show how positional strategies for  $k$-pebble games have a natural representation as certain presheaves.
These representations  correspond exactly to the sheaf-theoretic models of contextuality introduced by Abramsky-Brandenburger.
We study the notion of cohomological $k$-consistency recently introduced by Adam \'O Conghaile from this perspective.
\end{abstract}

\maketitle

\section{Introduction}
These are working notes, originally written in September-November 2021. Adam \'O Conghaile has proposed in \cite{AOC2021} some very interesting ideas on how to leverage ideas from the sheaf-theoretic approach to contextuality \cite{abramsky2011sheaf,DBLP:conf/csl/AbramskyBKLM15} to obtain cohomological refinements of $k$-consistency in constraint satisfaction, and of the Weisfeiler-Leman approximations to structure isomorphism. The purpose of these notes was to develop the sheaf-theoretic aspects of these ideas more fully, and to give a more uniform and conceptual account of the cohomological refinements.

They are intended to feed into  ongoing joint work with Adam, Anuj Dawar, and Rui Soares Barbosa which aims to get a deeper understanding and more extensive results on this approach.

\subsection*{Notation}

We will not distinguish notationally  in this note between a $\sg$-structure $A$ and its underlying universe or carrier, also written as $A$.

\textbf{Stylistic note} When we say ``there is a bijective correspondence between things of type $X$ and things of type $Y$'', we mean that there are explicit, inverse transformations between descriptions of things of these types, and hence the two notions are essentially the same.

We fix a finite relational vocabulary $\sg$, and a finite $\sg$-structure $B$. 
We use the notation $A \to B$ to mean that there exists a homomorphism from $A$ to $B$, and $A \nto B$ for the negation of this statement.
The \emph{constraint satisfaction problem} $\CSP(B)$ is to decide, for an instance given by a finite $\sg$-structure $A$, whether there is a homomorphism $A \to B$.
We refer to $B$ as the \emph{template}.

\section{Positional strategies and presheaves}

Given a finite $\sg$-structure $A$, a \emph{positional strategy} for the existential $k$-pebble game from $A$ to $B$ is given by a family $S$ of homomorphisms $f : C \to B$, where $C$ is an induced substructure of $A$ with $|C| \leq k$. This is subject to the following conditions:
\begin{itemize}
\item \textbf{down-closure}: If $f : C \to B \in S$ and $C' \subseteq C$, then $f |_{C'} : C' \to B \in S$.
\item \textbf{forth condition}: If $f : C \to B \in S$, $|C| <k$, and $a \in A$, then for some $f' : C \cup \{a\} \to B \in S$, $f' |_{C} = f$.
\end{itemize}
Duplicator wins the existential $k$-pebble game iff there is a non-empty such  strategy.
It is well-known that the existence of such a strategy is equivalent to the \emph{strong $k$-consistency} of $A$ in the sense of constraint satisfaction \cite{kolaitis2000game}. Moreover, strong $k$-consistency can be determined by a polynomial-time algorithm.

We write $\Sk(A)$ for the poset of subsets of $A$ of cardinality $\leq k$. Each such subset gives rise to an induced substructure of $A$.
We define a presheaf $\Hk : \Sk(A)^{\op} \to \Set$ by $\Hk(C) = \hom(C,B)$. If $C' \subseteq C$, then the restriction maps are defined by $\rho^{C}_{C'}(h) = h |_{C'}$.
This is the \emph{presheaf of partial homomorphisms}.\footnote{We can make this into a sheaf, analogous to the event sheaf in \cite{DBLP:conf/csl/AbramskyBKLM15}.}

A subpresheaf of $\Hk$ is a presheaf $\Sh$ such that $\Sh(C) \subseteq \Hk(C)$ for all $C \in \Sk(A)$, and moreover if $C' \subseteq C$ and $h \in \Sh(C)$, then $\rho^{C}_{C'}(h) \in \Sh(C')$.
A presheaf is \emph{flasque} (or ``flabby'') if the restriction maps are surjective. This means that if $C \subseteq C'$, each $h \in \Sh(C)$ has an extension $h' \in \Sh(C')$ with $h' |_{C} = h$.
\begin{proposition}
There is a bijective correspondence between
\begin{enumerate}
\item positional strategies from $A$ to $B$
\item flasque sub-presheaves of $\Hk$.
\end{enumerate}
\end{proposition}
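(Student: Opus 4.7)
The plan is to exhibit explicit maps in both directions and check they are mutually inverse, with the non-triviality concentrated in matching the ``one element at a time'' forth condition against the ``arbitrary inclusion'' flasqueness condition.

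First, given a positional strategy $S$, I would define a subfamily $\Sh_S$ of $\Hk$ by $\Sh_S(C) := S \cap \hom(C,B)$, i.e.\ the homomorphisms in $S$ whose domain is exactly $C$. The down-closure axiom for $S$ is literally the condition that $\Sh_S$ is closed under the restriction maps $\rho^{C}_{C'}$, so $\Sh_S$ is a sub-presheaf of $\Hk$. To verify flasqueness, I need to show that for any inclusion $C \subseteq C'$ in $\Sk(A)$ and any $f \in \Sh_S(C)$, there is some $f' \in \Sh_S(C')$ with $f'|_C = f$. I would do this by induction on $|C' \setminus C|$: the base case is trivial; for the inductive step, pick $a \in C' \setminus C$, note $|C| < k$ since $|C'| \leq k$ and the inclusion is strict, and apply the forth condition to produce an extension in $\Sh_S(C \cup \{a\})$, then iterate.

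Conversely, given a flasque sub-presheaf $\Sh$, I would define $S_{\Sh} := \bigcup_{C \in \Sk(A)} \Sh(C)$, viewed as a set of homomorphisms (tagged with their domains, to avoid any ambiguity if the same map extends distinct domains). Down-closure is immediate from the sub-presheaf property. For the forth condition, given $f \in \Sh(C)$ with $|C| < k$ and $a \in A$, the set $C \cup \{a\}$ lies in $\Sk(A)$ since its cardinality is at most $k$, and flasqueness applied to the inclusion $C \subseteq C \cup \{a\}$ supplies the required extension $f' \in \Sh(C \cup \{a\})$.

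Finally, I would observe that the two constructions are mutually inverse essentially by definition: $\Sh_{S_{\Sh}}(C) = \Sh(C)$ since both sides pick out the homomorphisms with domain $C$ lying in the structure, and $S_{\Sh_S} = S$ since every element of $S$ has some domain $C \in \Sk(A)$ and is collected back into the union. The only step that is not purely tautological is the equivalence between single-step extensions (forth) and arbitrary extensions (flasqueness), which is the inductive argument above; I do not expect any real obstacle, but this is where the content lies.
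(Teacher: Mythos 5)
Your proposal is correct and follows essentially the same route as the paper, which simply observes that down-closure matches the sub-presheaf condition and the forth condition matches flasqueness. Your explicit induction on $|C' \setminus C|$ spells out the step (bridging single-element extensions to arbitrary inclusions $C \subseteq C'$) that the paper leaves implicit.
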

\begin{proof}
The property of being a subpresheaf of $\Hk$ is equivalent to the down-closure property, while being flasque is equivalent to the forth condition.
\end{proof}

Note that these flasque subpresheaves correspond to the ``empirical models'' studied in \cite{DBLP:conf/csl/AbramskyBKLM15} in relation to contextuality.

We can define a union operation on sub-presheaves of $\Hk$ pointwise: $(\bigcup_{i \in I} \Sh_i)(C) = \bigcup_{i \in I} \Sh_i (C)$. This preserves the property of being a flasque subpresheaf of $\Hk$. Thus there is a largest such sub-presheaf, which we denote by $\Sbar$. The following is then immediate:
\begin{proposition}
The following are equivalent:
\begin{enumerate}
\item $A$ is strongly $k$-consistent.
\item $\Sbar \neq \es$.
\end{enumerate}
\end{proposition}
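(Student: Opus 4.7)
The plan is to chain together the equivalences already assembled. By the definitions in the excerpt, together with the Kolaitis--Vardi theorem cited there, strong $k$-consistency of $A$ is equivalent to the existence of a non-empty positional strategy from $A$ to $B$. By the preceding proposition, this in turn is equivalent to the existence of a non-empty flasque sub-presheaf of $\Hk$. So it suffices to show that a non-empty flasque sub-presheaf of $\Hk$ exists if and only if $\Sbar \neq \es$.

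Both directions are then immediate. For (1) $\Rightarrow$ (2), any non-empty flasque sub-presheaf $\Sh$ is contained in the pointwise union $\Sbar$ of all flasque sub-presheaves, so $\Sbar$ is non-empty a fortiori. For (2) $\Rightarrow$ (1), $\Sbar$ itself is a flasque sub-presheaf, since, as noted in the paragraph preceding the statement, pointwise union preserves both the subpresheaf property and flasqueness; so if $\Sbar \neq \es$, it serves as the required witness.

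I expect no real obstacle: the proposition is essentially a formal corollary of the previous one plus the closure of flasque sub-presheaves under pointwise union. The one minor point worth being explicit about is the reading of ``non-empty''; for a flasque sub-presheaf $\Sh$ of $\Hk$, having $\Sh(C) \neq \es$ for some $C$ forces $\Sh(\es) \neq \es$ by the subpresheaf (restriction) property, and conversely $\Sh(\es) \neq \es$ propagates upwards along inclusions by flasqueness, so the various possible readings coincide and match the condition that the corresponding strategy contains some $f : C \to B$.
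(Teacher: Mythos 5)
Your proof is correct and follows exactly the route the paper intends: the paper declares the proposition ``immediate'' after noting that flasque sub-presheaves are closed under pointwise union (so $\Sbar$ is the largest one), relying implicitly on the previous proposition's bijection with positional strategies and the cited Kolaitis--Vardi equivalence with strong $k$-consistency. You have simply made that chain explicit, including the unproblematic clarification of what ``non-empty'' means; no substantive difference from the paper's argument.
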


\section{Presheaf representations and  the pebbling comonad}\label{pebbsec}

\begin{proposition}\label{pebbprop}
The following are equivalent:
\begin{enumerate}
\item There is a coKleisli $I$-morphism $\Pk A \to B$
\item There is a non-empty flasque sub-presheaf $\Sh$ of $\Hk$.
\item $\Sbar \neq \es$.
\end{enumerate}
\end{proposition}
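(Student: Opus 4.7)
The plan is to chain together three equivalences, two of which are essentially already in hand and one of which requires citing existing work on the pebbling comonad.

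First I would dispatch the equivalence (2) $\Leftrightarrow$ (3). The implication (3) $\Rightarrow$ (2) is trivial, since $\Sbar$ is itself a flasque subpresheaf. For (2) $\Rightarrow$ (3), recall the observation made just before the statement: the pointwise union of any family of flasque subpresheaves of $\Hk$ is again a flasque subpresheaf, so $\Sbar$ is defined as the union of \emph{all} such subpresheaves. Hence if any non-empty flasque subpresheaf $\Sh$ exists, then $\Sh \subseteq \Sbar$, and in particular $\Sbar \neq \es$.

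Next I would reduce (2) to a statement about positional strategies. By the proposition in the previous section, non-empty flasque subpresheaves of $\Hk$ are in bijective correspondence with non-empty positional strategies from $A$ to $B$ for the existential $k$-pebble game. So (2) is equivalent to Duplicator having a winning positional strategy in this game.

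The remaining step, and the substantive one, is the equivalence of (1) with the existence of a winning positional strategy. This is precisely the content of the characterisation of coKleisli morphisms for the pebbling comonad $\Pk$ established in the original Abramsky--Dawar--Wang work on game comonads, where a coKleisli $I$-morphism $\Pk A \to B$ is shown to encode exactly a winning strategy for Duplicator in the existential $k$-pebble game from $A$ to $B$, equivalently (by Kolaitis--Vardi \cite{kolaitis2000game}) the strong $k$-consistency of $A$. I would simply cite this correspondence rather than reprove it; the only point to check is that the morphism-class $I$ here is the one matching the \emph{existential} (rather than the bijective or equivalence) version of the game, so that the associated strategies are positional in the sense used above.

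The main obstacle, such as it is, is bookkeeping: one must make sure that the flavour of coKleisli morphism used in (1) matches the flavour of positional strategy used in the previous section (down-closure plus forth condition only, no back condition, no bijection requirement), so that the cited comonadic correspondence applies directly. Once this is pinned down, all three equivalences collapse into a one-line argument and the proof is complete.
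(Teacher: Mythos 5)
Your proof is correct, but it takes a genuinely different route from the paper's. You reduce (1)$\Leftrightarrow$(2) to the previously established bijection between flasque subpresheaves and positional strategies, and then cite the Abramsky--Dawar--Wang characterisation of coKleisli $I$-morphisms together with Kolaitis--Vardi to close the loop. The paper instead gives a \emph{direct, self-contained construction} between coKleisli morphisms and flasque subpresheaves: from $f : \Pk A \to B$ it builds $\Sh_f$ explicitly (reading off the live-pebble set and the induced partial homomorphism from each play prefix), and conversely it builds $f_{\Sh}$ from $\Sh$ by induction on play length, with a case analysis on how the live-pebble set changes. The paper's construction has the virtue of exposing a non-obvious subtlety — verifying down-closure for $\Sh_f$ requires moving a pebble onto an already-pebbled element and appealing to the $I$-morphism condition to conclude that the duplicate placement receives the same response — and it also sets up the remark that follows the proof, namely that these passages are not mutually inverse (choices were made), and that a tighter correspondence requires the more involved argument of \cite[Prop.~9]{abramsky2017pebbling}. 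Your citation-based route is shorter and modular, but it glosses over one step: the ADW result in \cite{abramsky2017pebbling} characterises coKleisli $I$-morphisms via winning strategies for Duplicator in general (history-dependent ones), so you need an additional appeal to positional determinacy of the existential $k$-pebble game — or equivalently, to the Kolaitis--Vardi equivalence with strong $k$-consistency — to pass from a winning strategy to a \emph{positional} one before invoking the bijection with flasque subpresheaves. You do gesture at Kolaitis--Vardi, so this is a presentational imprecision rather than a real gap, but it is worth making explicit that the two cited results are doing separate jobs.
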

\begin{proof}
Given a coKleisli morphism $f : \Pk A \to B$, each $s \in \Pk A$ determines a subset $C \in \Sk(A)$ of those elements with ``live'' pebbles placed on them, and the responses of $f$ to the prefixes of $s$ determines a homomorphism $h : C \to B$, as described in detail in \cite{abramsky2017pebbling}. We define $\Sh_f(C)$ to be the set of all such morphisms, extending this with $\Sh_f(\es) = \es$.
The forth property follows from the fact that if $|C| < k$, then we have a free pebble we can place on an additional element, and the response of $f$ to this extension of the sequence gives the required extension of $h$.

Somewhat surprisingly, the down-closure property is slightly more subtle. If $C$ is a singleton, this follows trivially from our stipulation on the empty set. Otherwise, we can remove an element $a$ from $C$ by moving the pebble currently placed on it to another element $a'$ of $C$. This means we have duplicate pebbles placed on $a'$. Because $f$ is an $I$-morphism, it must give the same response to this duplicate placing as it did when the previous pebble was placed on $a'$, and we obtain the restriction of $h$ as required.

Now suppose that we have $\Sh$. For each $s \in \Pk A \cup \{ []\}$, we define a homomorphism $h_s : C \to B$, where $C$ is the set of elements with live pebbles in $s$, by induction on $|s|$.
When $s = []$, $h_s = \es$.
Given $s[(p,a)] \in \Pk A$, by induction we have defined $h_s : C \to B$. There are several cases to consider, depending on the set of live elements $C'$ corresponding to $s[(p,a)]$:
\begin{enumerate}
\item If $C' = C \cup \{a\}$, then since $\Sh$ is flasque, there is $h'  \in \Sh(C')$ with $h' |_{C} = h_s$. We define $h_{s[(p,a]]} = h'$.
\item It may be the case that $C' = C \setminus \{b\}$ for some $b$. This can happen if $a \in C$, so $p$ is a duplicate placing of a pebble on $a$, and the pebble $p$ was previously placed on $b$. 
In this case, we define $h_{s[(p,a]]} = h_s |_{C'}$.
\item Finally, we may have $C' = (C \setminus \{b\}) \cup \{a\}$. In this case, $h_s |_{C \setminus \{b\}} \in \Sh(C \setminus \{b\})$, and since $\Sh$ is flasque, has an extension $h'  \in \Sh(C')$.
We define $h_{s[(p,a]]} = h'$.
\end{enumerate}
Now we define $f_{\Sh}(s[(p,a)]) = h_{s[(p,a)]}(a)$.

The equivalence of (2) and (3) is immediate.
\end{proof}
Since choices were made in defining $f_{\Sh}$, the passages between $f$ and $\Sh_{f}$ and $\Sh$ and $f_{\Sh}$ are not mutually inverse. Using the more involved argument in \cite[Proposition 9]{abramsky2017pebbling}, we can get a closer correspondence.

\section{Local consistency as coflasquification}

Seen from the sheaf-theoretic perspective, the local consistency algorithm has a strikingly simple and direct mathematical specification.

Given a category $\CC$, we write $\Pshv{\CC}$ for the category of presheaves on $\CC$, \ie functors $\CC^{\op} \to \Set$, with natural transformations as morphisms. We write $\Pshvf{\CC}$ for the full subcategory of flasque presheaves.

\begin{proposition}
The inclusion $\Pshvf{\CC} \inc \Pshv{\CC}$ has a right adjoint, so the flasque presheaves form a coreflective subcategory.
The associated idempotent comonad on $\widehat{\Sk(A)}$ is written as $\Sh \mapsto \Shfl$, where $\Shfl$ is the largest flasque subpresheaf of $\Sh$. The counit is the inclusion $\Shfl \inc \Sh$, and idempotence holds since $\Sh^{\Diamond\Diamond} = \Shfl$. We have  $\Hk^{\Diamond} = \Sbar$.
\end{proposition}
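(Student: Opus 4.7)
The plan is to construct the right adjoint directly. Define $\Sh^{\Diamond}$ to be the pointwise union of all flasque subpresheaves of $\Sh$, and then derive the adjunction from its universal property. First I check this union is itself a flasque subpresheaf: the subpresheaf property is preserved by pointwise unions because each summand is closed under restriction; for flasqueness, if $C' \subseteq C$ and $h \in (\bigcup_i \Sh_i)(C')$, choose $j$ with $h \in \Sh_j(C')$, use flasqueness of $\Sh_j$ to produce an extension in $\Sh_j(C)$, and observe that this lies in the union. This is the same argument already used for subpresheaves of $\Hk$ in the paragraph preceding the previous proposition, and it specialises immediately to the identity $\Hk^{\Diamond} = \Sbar$.

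Next I verify the universal property. Let $\mathcal{F}$ be flasque and $\alpha : \mathcal{F} \to \Sh$ a natural transformation. The key claim is that the pointwise image $\mathrm{im}(\alpha)$ is a flasque subpresheaf of $\Sh$: naturality makes it a subpresheaf, and flasqueness follows because, for $h = \alpha_{C'}(x) \in \mathrm{im}(\alpha_{C'})$, an extension $x' \in \mathcal{F}(C)$ of $x$ (guaranteed by flasqueness of $\mathcal{F}$) yields $\alpha_C(x') \in \mathrm{im}(\alpha_C)$ restricting to $h$. Hence $\mathrm{im}(\alpha) \subseteq \Sh^{\Diamond}$, so $\alpha$ factors uniquely through the inclusion $\Sh^{\Diamond} \hookrightarrow \Sh$. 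This is the natural bijection $\Pshv{\CC}(\mathcal{F},\Sh) \cong \Pshvf{\CC}(\mathcal{F},\Sh^{\Diamond})$, with counit at $\Sh$ the inclusion $\Sh^{\Diamond} \hookrightarrow \Sh$. Idempotence is then immediate, since $\Sh^{\Diamond}$ is itself flasque and therefore its own largest flasque subpresheaf.

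The only step that uses more than formal bookkeeping is the image computation in the second paragraph, where flasqueness of the source $\mathcal{F}$ is genuinely invoked; everything else reduces to closure of flasqueness under set-indexed unions of subpresheaves, which is available because $\Set$ has all unions of subobjects. I do not expect any deeper obstacle.
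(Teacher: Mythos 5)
Your argument is correct and follows the same route as the paper's: construct $\Sh^{\Diamond}$ as the union of all flasque subpresheaves (using closure of flasque subpresheaves under pointwise unions), and establish couniversality via the observation that the image of a natural transformation out of a flasque presheaf is flasque. The paper states these two facts without spelling out the verification; you simply supply the details.
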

\begin{proof}
The argument for the existence of a largest flasque subpresheaf of a given presheaf is similar to that given in the previous section: the empty presheaf is flasque, and flasque subpresheaves are closed under unions.\footnote{Note that restriction is well-defined  on unions of subpresheaves of a given presheaf $\Sh$, which represent joins in the subobject lattice $\mathbf{Sub}(\Sh)$.} The key point for showing couniversality is that the image of a flasque presheaf under a natural transformation is flasque. Thus any natural transformation $\Sh' \natarrow \Sh$ from a flasque presheaf $\Sh'$  factors through the counit inclusion $\Shfl \inc \Sh$.
\end{proof}
This construction amounts to forming a greatest fixpoint. In our concrete setting, the standard local consistency algorithm (see e.g.~\cite{barto2014constraint}) builds this greatest fixpoint by filtering out elements which violate the restriction or extension conditions.

We shall give an explicit description of this construction in terms of presheaves, as it sets a pattern we shall return to repeatedly. Give a family $\{\Sh(C) \subseteq \Hk(C)\}_{C \in \Sk(A)}$\footnote{Not necessarily a presheaf, since it need not be closed under restriction.}, we define $\Shup$ by 
\[ \Shup(C) := \begin{cases}
\{ s \in \Sh(C) \mid  \forall a \in A. \, \exists s' \in \Sh(C \cup \{a\}). \, s' |_C = s \} & |C| < k \\
\Sh(C) & \text{otherwise}
\end{cases}
\]
and $\Shdown$ by
\[ \Shdown(C) := 
\{ s \in \Sh(C) \mid  \forall a \in C. \,  s |_{C \setminus \{a\}} \in \Sh(C \setminus \{a\}) \} 
\]
Clearly these constructions are polynomial in $|A|$, $|B|$. We can then define an iterative process
\[ \Hk \linc \Hk^{\uparrow\downarrow} \linc \cdots \linc \Hk^{(\uparrow\downarrow)^{m}} \linc \cdots \]
Since the size of $\Hk$ is polynomially bounded in $|A|, |B|$, this will converge to a fixpoint in polynomially many steps.
This fixpoint is $\Hk^{\Diamond} = \Sbar$.

This construction is dual to a standard construction in sheaf theory,  which constructs a flasque sheaf extending a given sheaf, leading to a monad, the \emph{Godement construction} \cite{godement1958topologie}.

The following proposition shows how this comonad propagates \emph{local inconsistency} to \emph{global inconsistency}.
\begin{proposition}
Let $\Sh$ be a presheaf on $\Sk(A)$.
If $\Sh(C) = \es$ for any $C \in \Sk(A) \setminus \{ \es \}$, then $\Shfl = \es$.
\end{proposition}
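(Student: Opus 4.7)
The plan is to propagate the emptiness of $\Sh(C)$ to every section of $\Shfl$ using two complementary mechanisms. Flasqueness transmits emptiness \emph{downward} along inclusions, since a surjection out of the empty set forces its codomain to be empty. Ordinary presheaf restriction transmits emptiness \emph{upward}, since a set-function cannot have a non-empty domain when its codomain is empty. The empty subset $\es \in \Sk(A)$ serves as the pivot connecting the two directions.

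Concretely, I would carry out three short steps. First, since $\Shfl$ is a subpresheaf of $\Sh$, we immediately get $\Shfl(C) \subseteq \Sh(C) = \es$. Second, flasqueness of $\Shfl$ applied to the inclusion $\es \subseteq C$ makes the restriction $\Shfl(C) \to \Shfl(\es)$ surjective; as its domain is empty, so is its codomain, giving $\Shfl(\es) = \es$. This is exactly where the hypothesis $C \neq \es$ is used: it ensures the inclusion $\es \subsetneq C$ is non-trivial so that flasqueness has content. Third, for any $D \in \Sk(A)$ the presheaf restriction $\Shfl(D) \to \Shfl(\es) = \es$ is a function into the empty set, which forces $\Shfl(D) = \es$; since $D$ was arbitrary we conclude $\Shfl = \es$.

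I do not anticipate any real obstacle here: the argument reduces to two elementary observations about functions into and out of the empty set, with $\es \in \Sk(A)$ acting as a universal bottleneck through which the vanishing is routed. The only point worth flagging is that the hypothesis explicitly excludes $C = \es$ precisely so that the flasqueness step in the middle has something to push through; without it, one could not use flasqueness to transport the vanishing down to $\Shfl(\es)$.
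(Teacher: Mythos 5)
Your proof is correct, and it takes a genuinely different route from the paper's.

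The paper reduces the claim to showing $\Shfl(\{a\}) = \es$ for all singletons $\{a\}$, noting that this forces $\Shfl$ to be the empty presheaf. It then splits into cases: for $a \in C$ it propagates emptiness downward from $\Shfl(C)$ to $\Shfl(\{a\})$ via flasqueness; for $b \notin C$ it first shows $\Shfl(\{a,b\}) = \es$ by restriction to $\{a\}$, and then pushes that down to $\Shfl(\{b\}) = \es$ by flasqueness. So the paper pivots through a chosen singleton $\{a\}$ and the two-element sets $\{a,b\}$, and needs a (lightly justified) reduction step at the outset.

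Your argument pivots through $\es$ instead, which is cleaner: $\es$ sits below every element of $\Sk(A)$, so you transmit emptiness down from $C$ to $\es$ by one application of flasqueness and then up from $\es$ to every $D$ by one application of functoriality, with no case split and no need for the singleton reduction. In effect you have extracted the structural content of the paper's argument ("flasqueness pushes vanishing down, restriction pushes vanishing up") and routed it through the universal bottom element rather than through an arbitrary singleton. This also fills in a small gap the paper leaves implicit: showing that $\Shfl(\es) = \es$ as well (needed for $\Shfl$ to literally be the initial presheaf), which your step two handles explicitly.

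One small quibble: you say the hypothesis $C \neq \es$ is "exactly where" the proof needs it, because otherwise the flasqueness step is vacuous. But if $C = \es$, you already have $\Shfl(\es) \subseteq \Sh(\es) = \es$ from step one, so step three would still finish the job; the flasqueness step would just be redundant rather than broken. The restriction to $C \neq \es$ in the statement is there to exclude a degenerate case (for subpresheaves of $\Hk$, $\Hk(\es)$ is a singleton so $\Sh(\es) = \es$ would be an uninteresting hypothesis), not because the argument would otherwise fail. This does not affect the correctness of your proof.
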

\begin{proof}
If is sufficient to show that $\Shfl(\{a\}) = \es$ for all $a \in A$, since the only presheaf for which this can hold is the empty presheaf. This will be the case for all $a \in C$ by propagating the forth condition.\footnote{A more precise argument goes by induction on $|C|$.} Now fix $a \in C$. For any $b \in A \setminus C$,
consider $\Shfl(\{ a,b \})$. This must be empty since it otherwise it would violate the restriction condition to $\{ a \}$.
But then $\Shfl(\{ b \})$ must be empty, since otherwise it would violate the forth condition for $a$.
\end{proof}

\section{Global sections and compatible families}
\label{secglobal}

A global section of a flasque subpresheaf $\Sh$ of $\Hk$ is a natural transformation $\One \natarrow \Sh$. More explicitly, it is a family $\{ h_{C} \}_{C \in \Sk(A)}$ with $h_C \in \Sh(C)$ such that, whenever $C \subseteq C'$, $h_C = h_{C'} |_{C}$. 
\begin{proposition}\label{homglobsecsprop}
Suppose that $k \geq n$, where $n$ is  the maximum arity of any relation in $\sg$. There is a bijective correspondence between
\begin{enumerate}
\item homomorphisms $A \to B$
\item global sections of $\Sbar$.
\end{enumerate}
\end{proposition}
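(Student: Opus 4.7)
The plan is a direct construction in both directions, with the arity hypothesis $k \geq n$ doing its work only in the backward direction.

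\textbf{Forward direction.} Given a homomorphism $h : A \to B$, I define $h_C := h |_C$ for each $C \in \Sk(A)$. Since $C$ is an induced substructure, $h_C \in \Hk(C)$. The family $\Sh_h$ with $\Sh_h(C) = \{ h |_C \}$ is clearly a subpresheaf of $\Hk$ (restrictions compose), and it is flasque because for $C \subseteq C'$ the unique element $h|_C$ is the restriction of the unique element $h|_{C'}$. Hence $\Sh_h \subseteq \Sbar$, so $\{h|_C\}_{C \in \Sk(A)}$ is a global section of $\Sbar$; compatibility $h|_C = h|_{C'}|_C$ is automatic.

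\textbf{Backward direction.} Given a global section $\{h_C\}_{C \in \Sk(A)}$ of $\Sbar$, I set $h(a) := h_{\{a\}}(a)$. To verify that $h$ is a homomorphism, let $R \in \sg$ be a relation of arity $r$ and suppose $(a_1,\dots,a_r) \in R^A$. Let $C = \{a_1,\dots,a_r\}$, so $|C| \leq r \leq n \leq k$, and therefore $C \in \Sk(A)$. The element $h_C \in \Sbar(C) \subseteq \hom(C,B)$ is a homomorphism from the induced substructure, so $(h_C(a_1),\dots,h_C(a_r)) \in R^B$. By compatibility applied to $\{a_i\} \subseteq C$, we have $h_C(a_i) = h_{\{a_i\}}(a_i) = h(a_i)$, and the tuple $(h(a_1),\dots,h(a_r)) \in R^B$ as required. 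This is the only place the hypothesis $k \geq n$ is used: without it, a relation tuple might fail to have its support in $\Sk(A)$, and we would have no local homomorphism from which to read off the relation-preservation.

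\textbf{Inverses.} Starting from a homomorphism $h$, the backward map returns $a \mapsto h|_{\{a\}}(a) = h(a)$, recovering $h$. Conversely, starting from a global section $\{h_C\}$, the forward map applied to the induced $h$ yields the family $\{h|_C\}$; for each $C$ and each $a \in C$, compatibility gives $h|_C(a) = h(a) = h_{\{a\}}(a) = h_C|_{\{a\}}(a) = h_C(a)$, and since both $h|_C$ and $h_C$ are functions $C \to B$ they coincide. Thus the two constructions are mutually inverse.

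\textbf{Main obstacle.} There is no real technical obstacle; the content of the proposition is essentially the observation that relations are witnessed on subsets of size at most $n$, which lie in $\Sk(A)$ precisely when $k \geq n$. The only thing to be careful about is the compatibility step linking the values $h_C(a_i)$ with $h_{\{a_i\}}(a_i)$, which is exactly what justifies defining $h$ pointwise from the singleton components of the global section.
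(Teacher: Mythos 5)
Your proof is correct and follows essentially the same route as the paper: forward direction via the observation that $\{h|_C\}$ is a flasque subpresheaf and hence sits inside $\Sbar$, backward direction via $h(a) := h_{\{a\}}(a)$ with the hypothesis $k \geq n$ used to put each relation tuple's support inside $\Sk(A)$. You additionally spell out the mutual-inverse check, which the paper leaves implicit in the word ``bijective.''
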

\begin{proof}
Given a homomorphism $h : A \to B$, the family $\{ h |_{C} \}_{C \in \Sk(A)}$ is a flasque subpresheaf of $\Hk$, and hence is a subpresheaf of $\Sbar$. There is a corresponding global section of $\Sbar$, which picks out $h |_{C}$ at each $C \in \Sk$.

Conversely, given a global section $\{ h_{C} \}_{C \in \Sk(A)}$, we can define $h : A \to B$ by $h(a) = h_{\{a\}}(a)$. We must show that $h$ is a homomorphism. Given a relation instance $R^A(a_1,\ldots ,a_n)$, since $k \geq n$, $C := \{ a_1, \ldots , a_n \} \in \Sk(A)$, and for each $i$, $h_C |_{\{a_i\}}  = h_{\{a_i\}}$, so $h_C(a_i) = h(a_i)$. Since $h_C$ is a homomorphism, the relation instance is preserved by $h$.
\end{proof}

\textbf{Fixed assumption}
Henceforth, we shall always make the background assumption that $k \geq n$, where $n$ is the relational width of $\sg$.

Another representation of global sections will be very useful. This will focus on the \emph{maximal elements} $\Mk(A)$ of the poset $\Sk(A)$, \ie~the $k$-element subsets. 
A \emph{$k$-compatible family} in $\Sbar$ is a family $\{ h_C \}_{C \in \Mk(A)}$ such that, for all $C, C' \in \Mk(A)$, 
\[ \rho^{C}_{C \cap C'}(h_C) = \rho^{C'}_{C \cap C'}(h_C') . \]

\begin{proposition}
\label{compfamprop}
There is a bijective correspondence between global sections  and $k$-compatible families of $\Sbar$.
\end{proposition}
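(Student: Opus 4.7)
The plan is to define maps in each direction and show they are mutually inverse. Going from a global section $\{h_C\}_{C \in \Sk(A)}$ to a $k$-compatible family, I would simply restrict the index set to maximal elements, yielding $\{h_C\}_{C \in \Mk(A)}$. Compatibility follows immediately from naturality of the global section: for $C, C' \in \Mk(A)$, the intersection $C \cap C'$ lies in $\Sk(A)$ and below both $C$ and $C'$, so $\rho^{C}_{C \cap C'}(h_C) = h_{C \cap C'} = \rho^{C'}_{C \cap C'}(h_{C'})$.

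For the inverse, given a $k$-compatible family $\{h_C\}_{C \in \Mk(A)}$, I would define $h_D$ for arbitrary $D \in \Sk(A)$ by choosing some $C \in \Mk(A)$ with $D \subseteq C$ and setting $h_D := \rho^{C}_D(h_C)$. The main content of the argument, and the step I expect to be the only real subtlety, is well-definedness: if $D \subseteq C$ and $D \subseteq C'$ with $C, C' \in \Mk(A)$, then $D \subseteq C \cap C'$, and applying $\rho^{C \cap C'}_D$ to both sides of the compatibility equation, together with functoriality of $\Sbar$, gives $\rho^{C}_D(h_C) = \rho^{C'}_D(h_{C'})$. Naturality of the extended family under an inclusion $D \subseteq D'$ in $\Sk(A)$ then follows by choosing $C \in \Mk(A)$ containing $D'$ (hence $D$) and computing $\rho^{D'}_D(h_{D'}) = \rho^{D'}_D \circ \rho^{C}_{D'}(h_C) = \rho^{C}_D(h_C) = h_D$ via functoriality.

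Verifying that the two assignments are mutually inverse is then routine. Round-tripping a $k$-compatible family through extension and restriction is the identity because for each $C \in \Mk(A)$ one may take $C$ itself as the witness, and $\rho^{C}_C$ is the identity. Round-tripping a global section recovers the original values on every $D \in \Sk(A)$ by the global section condition. One minor caveat concerns the case $|A| < k$, where one should read $\Mk(A)$ as consisting of $A$ itself, so that every $D \in \Sk(A)$ still has a maximal super-element, and the bijection reduces to an identification with a single top-level datum.
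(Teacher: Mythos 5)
Your proof is correct and takes essentially the same approach as the paper's: restrict a global section to the maximal contexts for one direction, and for the other, extend a $k$-compatible family by restricting from any maximal superset and verifying well-definedness via the intersection $C_1 \cap C_2$. You include some routine extra details (the explicit naturality check, mutual inverses, and the $|A|<k$ edge case for $\Mk(A)$) that the paper leaves implicit, but the argument is the same.
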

\begin{proof}
Clearly the restriction of a global section to $\Skm(A)$ gives a $k$-compatible family, since $ h_{C} |_{C \cap C'} = h_{C \cap C'} = h_{C'} |_{C \cap C'}$.

Conversely, given a $k$-compatible family $\{ h_{C} \}_{C \in \Skm(A)}$ and any $C' \in \Sk(A)$, we define $h_{C'} = h_{C_1} |_{C'}$, where $C' \subseteq C_1 \in \Skm(A)$. If $C' \subseteq C_2 \in \Skm(A)$, then $C' \subseteq C_1 \cap C_2$, and 
\[ h_{C_1} |_{C'} = (h_{C_1} |_{C_1 \cap C_2})|_{C'} = (h_{C_2} |_{C_1 \cap C_2})|_{C'} = h_{C_2} |_{C'}, \]
so this definition is consistent across maximal extensions of $C'$, and yields a well-defined global section.
\end{proof}

\section{Cohomological $k$-consistency}\label{cohomkconsec}

We can summarize the results of Section~\ref{secglobal} as follows:
\begin{proposition} 
There is a polynomial-time reduction from $\CSP(B)$ to the problem, given any instance $A$, of determining whether the associated presheaf $\Sbar$ has a global section, or equivalently, a $k$-compatible family.
\end{proposition}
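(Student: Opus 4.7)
The proof is essentially a bookkeeping exercise that combines the earlier propositions, so the plan is to assemble the pieces rather than prove anything new. The reduction itself is a constant-degree one: given an instance $A$ of $\CSP(B)$, the output of the reduction is just $A$ itself (equivalently, the presheaf $\Sbar$ associated to $A$), and the target problem is asked of the same instance.

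First I would argue that the data $\Sbar$ is computable in polynomial time from $A$ (with $B$ and $k$ fixed). The poset $\Sk(A)$ has size $O(|A|^k)$, and for each $C \in \Sk(A)$ the set $\Hk(C) = \hom(C,B)$ has size at most $|B|^k$, so the initial presheaf $\Hk$ has total size polynomial in $|A|$. By the construction in the ``Local consistency as coflasquification'' section, $\Sbar = \Hk^{\Diamond}$ is obtained as the fixpoint of the iteration $\Hk \supseteq \Hk^{\uparrow\downarrow} \supseteq \cdots$, each step of which is polynomial-time computable and strictly decreases the size of the presheaf unless a fixpoint has been reached; hence the iteration converges in polynomially many steps. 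Thus $\Sbar$, viewed as an explicit family $\{\Sbar(C)\}_{C \in \Sk(A)}$, can be written down in polynomial time.

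Next I would invoke Proposition~\ref{homglobsecsprop} to identify homomorphisms $A \to B$ with global sections of $\Sbar$ (using the fixed assumption $k \geq n$), and Proposition~\ref{compfamprop} to identify global sections of $\Sbar$ with $k$-compatible families. Chaining these bijections: $A \to B$ holds iff $\Sbar$ admits a global section iff $\Sbar$ admits a $k$-compatible family. This gives the claimed reduction of $\CSP(B)$ to the global-section / $k$-compatible-family problem for $\Sbar$.

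There is no real obstacle here; the only thing to be careful about is the polynomial-time bound on producing $\Sbar$, which hinges on both the $O(|A|^k)$ bound on $|\Sk(A)|$ (for fixed $k$) and the fact that the coflasquification iteration terminates in polynomially many passes because each pass either removes at least one element from some $\Sbar(C)$ or has reached the fixpoint. Everything else is just transport along the bijections already established.
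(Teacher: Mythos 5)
Your proposal is correct and matches the paper's intent exactly: the paper states this proposition with no explicit proof (``We can summarize the results of Section~\ref{secglobal} as follows''), and the intended argument is precisely the one you give — the reduction is the identity on $A$, Proposition~\ref{homglobsecsprop} identifies homomorphisms with global sections, Proposition~\ref{compfamprop} identifies those with $k$-compatible families, and the polynomial bound on computing $\Sbar$ is the one already established in the coflasquification section. Nothing to add.
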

Of course, since $\CSP(B)$ is NP-complete in general, so is the problem of determining the existence of a global section.
This motivates finding an efficiently computable approximation.

This leads us to the notion of cohomological $k$-consistency recently introduced by Adam \'O Conghaile \cite{AOC2021}.
This leverages results from \cite{DBLP:journals/corr/abs-1111-3620,DBLP:conf/csl/AbramskyBKLM15} on using cohomology to detect contextuality, and applies them in the setting of constraint satisfaction.

\section{Background on contextuality}

The logical structure of quantum mechanics is given by a family of overlapping perspectives or contexts. Each context appears classical, and different contexts agree locally on their overlap. However, there is no way to piece all these local perspectives together into an integrated whole, as shown in many experiments, and proved rigorously using the mathematical formalism of quantum mechanics. 

To illustrate this non-integrated feature  of quantum mechanics, we may consider the well-known ``impossible'' drawings by Escher, such as the one shown in Figure~\ref{EE0}.  

\begin{figure}
\begin{center}
\includegraphics[width=0.64 \textwidth]{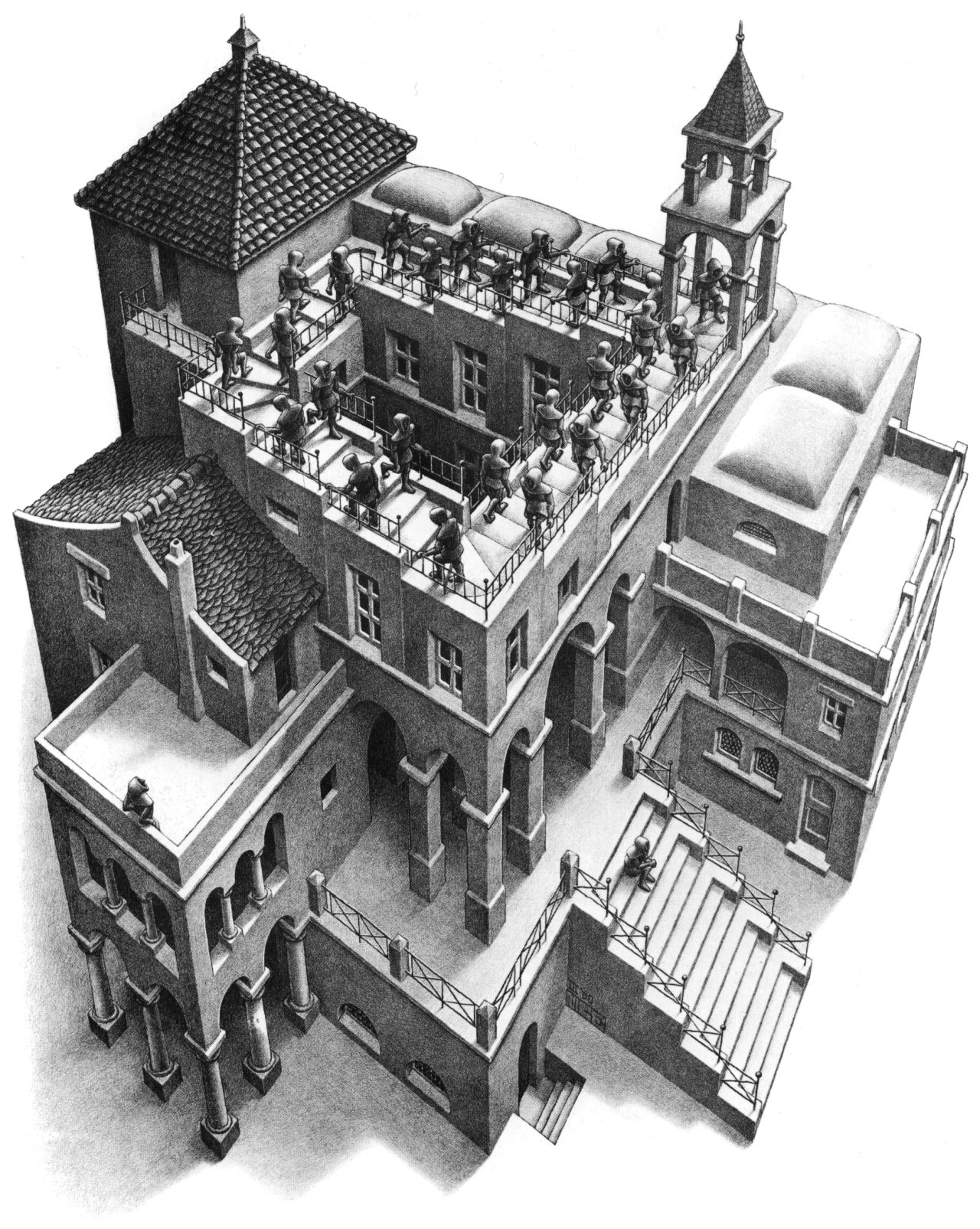}\\
\caption{M. C. Escher, \emph{Klimmen en dalen (Ascending and descending)}, 1960. Lithograph.}
\label{EE0}
\end{center}
\end{figure}

Clearly, the staircase \emph{as a whole} in Figure~\ref{EE0} cannot exist in the real world.  Nonetheless, the constituent parts of Figure~\ref{EE0} make sense \emph{locally}, as is clear from Figure~\ref{E1}.
Quantum contextuality shows that  the logical structure of quantum mechanics exhibits exactly these features of \emph{local consistency}, but \emph{global inconsistency}.
We note that Escher's work was inspired by the \emph{Penrose stairs} from \cite{pen15}.\footnote{Indeed, these figures provide more than a mere analogy. Penrose has studied the topological ``twisting'' in these figures using cohomology \cite{penrose1992cohomology}. This is quite analogous to our use of sheaf cohomology to capture the logical twisting in contextuality.}

\begin{figure}
\begin{center}
\includegraphics[width=0.7 \textwidth]{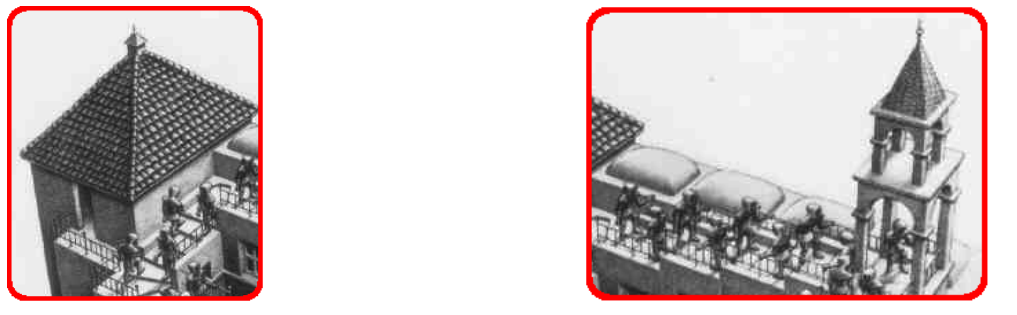}
\includegraphics[width=0.7 \textwidth]{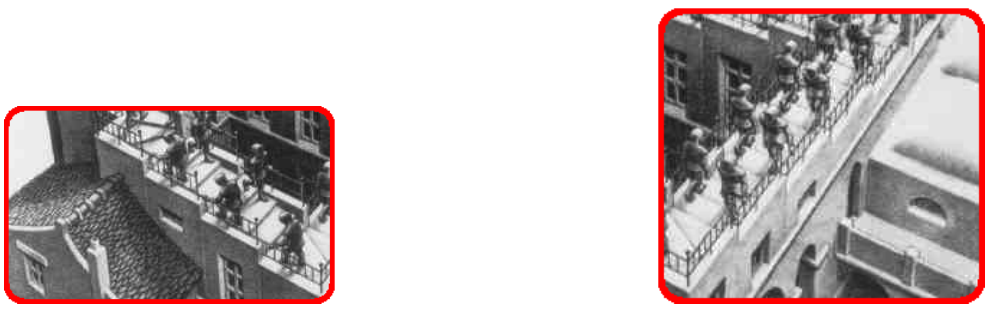}
\caption{Locally consistent parts of Figure~\ref{EE0}.}
\label{E1}
\end{center}
\end{figure}

\section{Brief review of contextuality, cohomology and paradox}

We begin by reviewing some background. In this note, we only consider commutative rings with unit, which we refer to simply as ``rings''.
Given a ring $R$, the category of $R$-modules is denoted $\RMod$. There is an evident forgetful functor $U : \RMod \to \Set$, and an adjunction 
\[ \begin{tikzcd}
\Set \arrow[r, bend left=25, ""{name=U, below}, "\FR"{above}]
\arrow[r, leftarrow, bend right=25, ""{name=D}, "U"{below}]
& \RMod
\arrow[phantom, "\textnormal{{$\bot$}}", from=U, to=D] 
\end{tikzcd}
\]
Here $\FR(X) = R^{(X)}$ is the free $R$-module generated by $X$, given by the formal $R$-linear combinations of elements of $X$, or equivalently the functions $X \to R$ of finite support, with pointwise addition and scalar multiplication. Given $f : X \to Y$, the functorial action of $\FR$ is given by $\FR f : \sum_i r_i x_i \mapsto \sum_j s_j y_j$, where $s_j = \sum_{f(x_i) = y_j} r_i$.
The unit of this adjunction $\eta_X : X \to R^{(X)}$ embeds $X$ in $R^{(X)}$ by sending $x$ to $1 \cdot x$, the linear combination with coefficient $1$ for $x$, and $0$ for all other elements of $X$.

Given $A$ with associated presheaf $\Sbar$, we can define the $\ZMod$-valued presheaf $\FZ \Sbar$.\footnote{Note that $\ZMod$ is isomorphic to $\AbGrp$, the category of abelian groups.} In  \cite{DBLP:journals/corr/abs-1111-3620,DBLP:conf/csl/AbramskyBKLM15} a cohomological invariant $\gamma$ is defined for a class of presheaves including $\FZ \Sbar$. 

\subsection*{The cohomological invariant from \cite{DBLP:journals/corr/abs-1111-3620,DBLP:conf/csl/AbramskyBKLM15}}

A cohomological invariant $\gamma$ is defined for a class of presheaves including $\FZ \Sbar$. 
\begin{itemize}
\item Given a flasque subpresheaf $\Sh$ of $\Hk$, we have the  $\AbGrp$-valued presheaf $\FF = \FZ \Sh$. We use the \Cech cohomology with respect to the cover $\MM = \Mk(A)$.
\item In order to focus attention at the context $C \in \MM$, we use the presheaf $\FF |_{C}$, which ``projects'' onto $C$. The cohomology of this presheaf is the \emph{relative cohomology} of $\FF$ at $C$.
The $i$'th relative \Cech cohomology group of $\FF$ is written as $\Cohom{i}$.
\item We have the \emph{connecting homomorphism} $\Cohom{0} \to \Cohom{1}$ constructed using the Snake Lemma of homological algebra.
\item The cohomological obstruction $\gamma : \FF(C) \to \Cohom{1}$ defined in \cite{DBLP:conf/csl/AbramskyBKLM15} is this connecting homomorphism, composed with the isomorphism $\FF(C) \cong \Cohom{0}$.
\end{itemize}

For our present purposes, the relevant property of this invariant is the following \cite[Proposition 4.2]{DBLP:journals/corr/abs-1111-3620}:
\begin{proposition}\label{gammacompfamprop}
For a local section $s \in \Sbar(C_0)$, with $C_0 \in \Mk(A)$, the following are equivalent:
\begin{enumerate}
\item $\gamma(s) = 0$
\item There is a $\ZZ$-compatible family $\{ \alpha_{C} \}_{C \in \Mk(A)}$ with $\alpha_C \in \FZ \Sbar(C)$, such that, for all $C, C' \in \Mk(A)$:
$\rho^{C}_{C \cap C'}(\alpha_C) =  \rho^{C'}_{C \cap C'}(\alpha_{C'})$. Moreover, $\alpha_{C_0} = 1\cdot s$.
\end{enumerate}
\end{proposition}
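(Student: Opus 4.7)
The approach is to unfold the Snake Lemma construction that defines $\gamma$ and identify \Cech $0$-cocycles of $\FF = \FZ \Sbar$ concretely with $\ZZ$-compatible families. First, note that a \Cech $0$-cochain with respect to the cover $\MM = \Mk(A)$ is a family $\{\alpha_C\}_{C \in \MM}$ with $\alpha_C \in \FF(C)$, and the \Cech differential $d^0$ sends it to the $1$-cochain whose $(C,C')$-component is $\rho^{C'}_{C \cap C'}(\alpha_{C'}) - \rho^{C}_{C \cap C'}(\alpha_{C})$. Hence the $0$-cocycles of $\FF$ are exactly the $\ZZ$-compatible families appearing in the statement.

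Second, I would recall the construction of $\gamma$. The relative presheaf $\FF|_{C_0}$ fits into a short exact sequence $0 \to \FF|_{C_0} \to \FF \to \FF/\FF|_{C_0} \to 0$, inducing a long exact sequence of \Cech cohomology groups. Its connecting homomorphism, composed with the canonical isomorphism $\FF(C_0) \cong H^0(\FF/\FF|_{C_0})$, is by definition $\gamma$. The Snake Lemma recipe then dictates how to compute $\gamma(s)$: choose any \Cech $0$-cochain $\{\alpha_C\}$ of $\FF$ lifting $1 \cdot s$, i.e.\ with $\alpha_{C_0} = 1 \cdot s$; then $d^0\{\alpha_C\}$ automatically lies in $\check{C}^1(\FF|_{C_0})$ and its class represents $\gamma(s)$.

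Third, I would translate the vanishing. If $\gamma(s) = 0$, there is a \Cech $0$-cochain $\{\beta_C\}$ in $\FF|_{C_0}$ with $d^0\{\beta_C\} = d^0\{\alpha_C\}$. By the design of $\FF|_{C_0}$ one has $\FF|_{C_0}(C_0) = 0$, so $\beta_{C_0} = 0$, and therefore $\{\alpha_C - \beta_C\}$ is a genuine $0$-cocycle of $\FF$---that is, a $\ZZ$-compatible family---whose component at $C_0$ is still $1 \cdot s$. The converse is immediate: a $\ZZ$-compatible family $\{\alpha_C\}$ with $\alpha_{C_0} = 1 \cdot s$ is itself a lift of $s$ satisfying $d^0\{\alpha_C\} = 0$, witnessing $\gamma(s) = 0$.

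The main obstacle is making precise the relative presheaf $\FF|_{C_0}$ used in \cite{DBLP:conf/csl/AbramskyBKLM15}: one must verify that $\FF|_{C_0}(C_0) = 0$ (so the adjustment by $\{\beta_C\}$ does not disturb the prescribed $C_0$-component) and that the projection $\FF \to \FF/\FF|_{C_0}$ induces the identification $\FF(C_0) \cong H^0(\FF/\FF|_{C_0})$. Once these identifications are in place, the equivalence follows by a direct application of the Snake Lemma, exactly as in \cite[Proposition 4.2]{DBLP:journals/corr/abs-1111-3620}.
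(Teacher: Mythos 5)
The paper does not prove this proposition; it imports it as \cite[Proposition 4.2]{DBLP:journals/corr/abs-1111-3620}, so there is no internal proof to compare against. Your Snake-Lemma unwinding is essentially the argument from that reference and is substantively correct: you correctly identify $0$-cocycles of $\FF = \FZ\Sbar$ with $\ZZ$-compatible families, lift the class of $1\cdot s$, push the coboundary into the kernel presheaf, and note that $\gamma(s)=0$ iff the lift can be adjusted by a kernel cochain (vanishing at $C_0$) to an honest $0$-cocycle with prescribed $C_0$-component.

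One labeling mismatch with the source's conventions is worth fixing. In \cite{DBLP:journals/corr/abs-1111-3620,DBLP:conf/csl/AbramskyBKLM15}, $\FF|_{C_0}$ denotes the \emph{quotient} presheaf $\FF|_{C_0}(U) := \FF(U \cap C_0)$, and the short exact sequence is $0 \to \FF_{\tilde{C}_0} \to \FF \to \FF|_{C_0} \to 0$ with kernel $\FF_{\tilde{C}_0}(U) = \ker(\FF(U) \to \FF(U \cap C_0))$. You have placed $\FF|_{C_0}$ as the kernel, so your claims that $\FF|_{C_0}(C_0) = 0$ and that $\FF(C_0) \cong \check{H}^0(\FF/\FF|_{C_0})$ are, in the standard notation, statements about $\FF_{\tilde{C}_0}$ and $\FF|_{C_0}$ respectively; with that relabelling both are correct and easily checked. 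You should also make explicit that flasqueness of $\Sbar$ (hence surjectivity of $\FF(C) \to \FF(C \cap C_0)$ after applying $\FZ$) is what makes the cochain-level sequence exact, so that lifts exist and the connecting homomorphism is defined. Finally, tighten the phrase ``i.e.\ with $\alpha_{C_0} = 1 \cdot s$'': a genuine lift of the $0$-cocycle determined by $1\cdot s$ in the quotient must satisfy $\alpha_C|_{C \cap C_0} = (1\cdot s)|_{C \cap C_0}$ for \emph{every} $C$, which is strictly stronger than the single condition at $C_0$; the two coincide precisely when $\{\alpha_C\}$ is already a $0$-cocycle of $\FF$, which is why the converse implication is immediate as you state.
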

We call  a family as in (2) a \emph{$\ZZ$-compatible extension} of $s$. We can regard such an extension as a ``$\ZZ$-linear approximation'' to a homomorphism $h : A \to B$ extending $s$.

Note that, under the embedding $\eta$, every global section of $\Sbar$, given by a compatible family $\{ h_C \}_{C \in \Mk(A)}$, will give rise to a $\ZZ$-compatible family $\{ \eta_{C}(h_C) \}_{C \in \Mk(A)}$. 

\section{Algorithmic properties of the cohomological obstruction}

The crucial advantage of the cohomological notion in the present setting is given by the following observation from \cite{AOC2021}:
\begin{proposition}
There is a polynomial-time algorithm for deciding, given $s \in \Sbar(C)$, $C \in \Mk(A)$, whether $s$ has a $\ZZ$-compatible extension.
\end{proposition}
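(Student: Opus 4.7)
The plan is to reduce the question to solvability of a system of linear equations over $\ZZ$, which is decidable in polynomial time via Hermite normal form computation.

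First, I would translate the definition of a $\ZZ$-compatible extension into a concrete linear system. The unknowns are the coefficients $r_{C,t} \in \ZZ$ for each $C \in \Mk(A)$ and each $t \in \Sbar(C)$, representing $\alpha_C = \sum_{t} r_{C,t}\cdot t \in \FZ\Sbar(C)$. Because the restriction map $\FZ\rho^{C}_{C\cap C'}$ acts on a basis element $t$ by sending it to its restriction, the compatibility condition $\rho^{C}_{C\cap C'}(\alpha_C) = \rho^{C'}_{C\cap C'}(\alpha_{C'})$ unfolds, for each $u \in \Sbar(C\cap C')$, into the single linear equation
\[
\sum_{\substack{t \in \Sbar(C) \\ t|_{C\cap C'} = u}} r_{C,t} \;=\; \sum_{\substack{t' \in \Sbar(C') \\ t'|_{C\cap C'} = u}} r_{C',t'} .
\]
The normalization clause $\alpha_{C_0} = 1\cdot s$ is imposed by the additional equations $r_{C_0,s} = 1$ and $r_{C_0,t} = 0$ for $t \neq s$. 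By Proposition~\ref{gammacompfamprop}, $\gamma(s) = 0$ if and only if this system is solvable in $\ZZ$.

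Next I would bound the size of the system. For fixed $k$, the number of contexts $|\Mk(A)|$ is $O(|A|^k)$, and each $|\Sbar(C)| \leq |\hom(C,B)| \leq |B|^k$, so the total number of variables is $O((|A||B|)^k)$. The equations are indexed by pairs $(C,C') \in \Mk(A)^2$ together with an element of $\Sbar(C\cap C')$, giving $O(|A|^{2k}|B|^k)$ equations. The normalization contributes $|B|^k$ further equations. Since $\Sbar$ itself is computable in polynomial time by the construction of Section~\ref{pebbsec}, the matrix of the system can be written down in polynomial time, with all entries in $\{-1,0,1\}$.

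Finally, I would invoke a classical result from integer linear algebra: the solvability over $\ZZ$ of a linear system $M x = b$ with integer entries can be decided in time polynomial in the bit-size of $(M,b)$, for example by reducing $M$ to Hermite normal form (Kannan--Bachem, with polynomial bit-length control due to Domich--Kannan--Trotter, or the Chou--Collins algorithm). Applying this to the system above yields the desired polynomial-time decision procedure.

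The main obstacle is really the bookkeeping that makes the argument honest: ensuring that the number of generators of each $\FZ\Sbar(C)$ and the arithmetic bit-size remain controlled, and citing an appropriate polynomial-time Hermite/Smith normal form algorithm that certifies solvability over $\ZZ$ (as opposed to merely over $\QQ$, which would be strictly weaker and would miss genuine cohomological obstructions coming from torsion). Once those points are in place, the proof amounts to the observation that $\gamma(s) = 0$ is, by its very definition through the connecting homomorphism of the Snake Lemma, a $\ZZ$-linear feasibility question of polynomial size.
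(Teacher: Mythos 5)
Your proposal is correct and follows essentially the same route as the paper: unfold the definition of a $\ZZ$-compatible extension into the same system of compatibility equations with $\{-1,0,1\}$-coefficients, add the normalization constraints at $C_0$, observe the polynomial size bounds $O(|A|^k)$ contexts and $O(|B|^k)$ sections per context, and appeal to polynomial-time integer linear system solving (the paper cites Kannan; you cite Kannan--Bachem/Hermite normal form, which is the same toolbox). Your extra remark about the distinction between $\ZZ$- and $\QQ$-solvability, and the torsion phenomena that make this distinction matter cohomologically, is a genuine and worthwhile clarification even though the paper does not make it explicit.
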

\begin{proof}
Firstly, the number of ``contexts'' $C \in \Mk(A)$ is given by $K := |\Mk(C)| = \binom{|A|}{k} \leq |A|^k$.
For each context $C$, the number of elements of $\Sbar(C)$ is bounded by $M := |B|^k$. Each constraint $\rho^{C}_{C \cap C'}(\alpha_C) =  \rho^{C'}_{C \cap C'}(\alpha_{C'})$ can be written as a set of homogeneous linear equations: for each $s \in \Sbar(C \cap C')$, we have the equation
\[ \sum_{\substack{t \in \Sbar(C),\\ t |_{C \cap C'} = s}} r_{C,t}  \;\;\; - \;\; \sum_{\substack{t' \in \Sbar(C'), \\ t' |_{C \cap C'} = s}} r_{C',t'} \; = \; 0 \]
in the variables $r_{C,s}$ as $C$ ranges over contexts, and $s$ over $\Sbar(C)$. Altogether, there are $\leq K^2M$ such equations, over $KM$ variables.
The constraint that $\alpha_{C_0} = 1\cdot s$ can be written as  a further $M$ equations forcing $r_{C_{0},s} = 1$, and $r_{C_{0},s'} = 0$ for $s' \in \Sbar(C) \setminus \{s\}$.
The whole system can be described by the equation $\mathbf{A}\mathbf{x} = \mathbf{v}$, where $\mathbf{A}$ is a matrix with entries in $\{ -1, 0, 1\}$, of dimensions $(K^2 +1)M \times KM$, which is of size polynomial in $|A|$, $|B|$, while $\mathbf{v}$ is a vector with one entry set to $1$, and all other entries to $0$. The existence of a $\ZZ$-compatible extension of $s$ is equivalent to the existence of a solution for this system of equations. Since solving systems of linear equations over $\ZZ$ is in PTIME \cite{kannan1979polynomial}, this yields the result.
\end{proof}

Given a flasque subpresheaf $\Sh$ of $\Hk$, and $s \in \Sh(C)$, $C \in \Mk(A)$, we write $\Ztest(\Sh,s)$ for the predicate which holds iff $s$ has a $\ZZ$-compatible extension in $\Sh$. 
The key idea in \cite{AOC2021} is to use this predicate as a filter to refine local consistency. 
We define $\Shch \inc \Sh$ by 
\[ \Shch(C) := \begin{cases}
 \{ s \in \Sh(C) \mid \Ztest(\Sh,s)  \}  & \text{$C \in \Mk(A)$} \\
 \Sh(C) & \text{otherwise}
 \end{cases}
 \]
Note that $\Shch$ can be computed with polynomially many calls of $\Ztest$, and thus is itself computable in polynomial time.
$\Shch$ is closed under restriction, hence a presheaf. It is not necessarily flasque.
Thus we are led to the following iterative process, which is the ``cohomological $k$-consistency'' algorithm from \cite{AOC2021}:
\[ \Hk \linc \Hk^{\fl} \linc \Hk^{\fl\ch\fl} \linc \cdots \linc \Hk^{\fl(\ch\fl)^{m}} \linc \cdots \]
Since the size of $\Hk$ is polynomially bounded in $|A|, |B|$, this will converge to a fixpoint in polynomially many steps.
We write $\Shm$ for the $m$'th iteration of this process, and $\Shst$ for the fixpoint. Note that $\Sbar = \Sh_{k}^{(0)}$.

Note that this process involves computing $\Ztest(\Sh, s)$ for the \emph{same} local section $s$ with respect to \emph{different} presheaves $\Sh$. 
\begin{proposition}
If $\Sh' \inc \Sh$ are flasque subpresheaves of $\Hk$, with $s \in \Sh'(C) \subseteq \Sh(C)$, $C \in \Mk(A)$, then 
$\Ztest(\Sh',s)$  implies $\Ztest(\Sh,s)$. The converse does not hold in general.
\end{proposition}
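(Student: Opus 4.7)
The plan is to separate the two claims and handle them independently. For the forward implication, I would use functoriality of the free $\ZZ$-module construction: pointwise, the hypothesis $\Sh'(C) \subseteq \Sh(C)$ induces an inclusion $\FZ \Sh'(C) \hookrightarrow \FZ \Sh(C)$, and these inclusions assemble into a monic natural transformation $\FZ \Sh' \natarrow \FZ \Sh$ of $\AbGrp$-valued presheaves, intertwining the two restriction structures. Any $\ZZ$-compatible extension $\{\alpha_C\}_{C \in \Mk(A)}$ of $s$ inside $\Sh'$, as characterised by Proposition~\ref{gammacompfamprop}, therefore lifts along this inclusion to a family in $\FZ \Sh$ satisfying the same compatibility equations $\rho^{C}_{C \cap C'}(\alpha_C) = \rho^{C'}_{C \cap C'}(\alpha_{C'})$ and the normalization $\alpha_{C_0} = 1 \cdot s$. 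By Proposition~\ref{gammacompfamprop} again, this witnesses $\Ztest(\Sh,s)$.

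For the failure of the converse, the plan is to use the iterative process $\Hk \supseteq \Hk^{\fl} \supseteq \Hk^{\fl\ch\fl} \supseteq \cdots \supseteq \Shst$ itself as a source of counterexamples. Suppose some instance admits an $m$ for which $\Shm[m{+}2] \subsetneq \Shm[m{+}1]$, and pick $s \in \Shm[m{+}1](C) \setminus \Shm[m{+}2](C)$ for some $C \in \Mk(A)$. Setting $\Sh := \Shm[m]$ and $\Sh' := \Shm[m{+}1]$, both sides of the inclusion are flasque by construction, and $s \in \Sh'(C) \subseteq \Sh(C)$. By the very definition of the filter $\ch$, membership $s \in \Shm[m{+}1]$ forces $\Ztest(\Shm[m], s)$ to hold, while $s \notin \Shm[m{+}2]$ forces $\Ztest(\Shm[m{+}1], s)$ to fail. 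This is exactly the required separation.

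The forward implication is essentially formal, amounting to functoriality of $\FZ$ together with the observation that the compatibility equations are insensitive to the ambient presheaf. The main obstacle is exhibiting a genuine witness to the failure of the converse: one must confirm the existence of CSP instances on which the cohomological $k$-consistency algorithm performs strictly more than one non-trivial filtering step, and verify that the $\ZZ$-compatible family supporting $\Ztest(\Sh, s)$ genuinely uses coefficients on local sections that are excluded from $\Sh'$. I would cite \cite{AOC2021} for the existence of such instances, noting that they are close in spirit to the sheaf-theoretic contextuality witnesses of \cite{DBLP:conf/csl/AbramskyBKLM15}; producing a self-contained explicit example would be the real work.
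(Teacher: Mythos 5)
Your forward implication is correct, and it is essentially the paper's argument in categorical dress. The paper phrases $\Ztest$ as solvability of a $\ZZ$-linear system, with the system for $\Sh'$ obtained from the one for $\Sh$ by adjoining extra equations $F$ that force to zero every coefficient of a local section lying in $\Sh$ but not in $\Sh'$; a solution of $E \cup F$ is then trivially a solution of $E$. Your monic natural transformation $\FZ \Sh' \natarrow \FZ \Sh$ does exactly the same thing one level up: the $\ZZ$-compatible family for $\Sh'$, viewed as a family in $\FZ \Sh$ via the basis inclusion, is supported on $\Sh'$ and hence automatically ``satisfies $F$'' while satisfying the same compatibility equations and the normalization $\alpha_{C_0} = 1 \cdot s$. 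The two formulations are interchangeable; neither is more or less general.

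For the failure of the converse, your reduction to the existence of an instance on which the cohomological $k$-consistency iteration makes at least two non-trivial filtering steps is sound in outline, but two things need care. First, a small gap: a local section $s$ can disappear between two consecutive stages of the iteration either because it fails the cohomological filter $\Ztest$ or because it is subsequently discarded during coflasquification; your choice of $s$ must be one removed by the former, since only then does $\Ztest(\Sh',s)$ actually fail. This is always arrangeable when a stage strictly shrinks (if the cohomological filter removes nothing, the ensuing coflasquification of the already-flasque stage is the identity, so the stage would be a fixpoint), but the point has to be made. Second, and more fundamentally, your argument is conditional on the existence of such multi-step instances, which you defer to \cite{AOC2021}. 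That said, the paper's own proof of this direction is equally non-constructive: it only observes that solutions of $E$ need not extend to $E \cup F$, which explains why the naive converse argument breaks down rather than exhibiting a concrete counterexample. Your reformulation via the iterative algorithm is a nice alternative packaging of the same observation, but neither route, as written, actually produces a witness.
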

\begin{proof}
Deciding $\Ztest(\Sh,s)$ amounts to determining whether a set $E$ of $\ZZ$-linear equations has a solution. The corresponding procedure for $\Ztest(\Sh',s)$ is equivalent to solving a set $E \cup F$, where $F$ is a set of equations forcing some of the variables in $E$, corresponding to the coefficients of local sections in $\Sh$ which are not in $\Sh'$, to be $0$. A solution for $E \cup F$ will also be a solution for $E$, but solutions for $E$ may not extend to $E \cup F$.
\end{proof}

Returning to the CSP decision problem, we define some relations on structures:
\begin{itemize}
\item We define $A \tok B$ iff $A$ is\emph{ strongly $k$-consistent} with respect to $B$, \ie iff $\Sbar = \Sh_{k}^{(0)} \neq \es$.
\item We define $A \toZk B$ if $\Shst \neq \es$, and say that $A$ is \emph{cohomologically $k$-consistent} with respect to $B$.
\item We define $A \toZko B$ if $\Shone \neq \es$, and say that $A$ is \emph{one-step cohomologically $k$-consistent} with respect to $B$.
\end{itemize}

As already observed, we have:
\begin{proposition}
\label{ckconprop}
There are algorithms to decide $A \toZk B$ and $A \toZko B$ in time polynomial in $|A|$, $|B|$.
\end{proposition}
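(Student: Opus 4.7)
The plan is to build both algorithms as polynomial-length chains of polynomial-time primitives, all already identified in the preceding sections.

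For $\toZko$, I would compute $\Shone = \Hk^{\fl\ch\fl}$ in three phases. First, build $\Hk^{\fl} = \Sbar$ via the iterated $(\uparrow\downarrow)$ coflasquification, which terminates in polynomially many steps since $|\Hk| \leq |A|^k |B|^k$ and each $\uparrow$ or $\downarrow$ step is polynomial. Second, apply $\ch$: for each of the at most $|A|^k$ contexts $C \in \Mk(A)$ and each of the at most $|B|^k$ local sections $s \in \Sbar(C)$, invoke the polynomial-time $\Ztest(\Sbar,s)$ subroutine established in the preceding proposition. The number of such calls is polynomial, so the phase is polynomial. Third, apply $\fl$ once more to the result. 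Finally, test emptiness of $\Shone$ at some (equivalently, any non-empty) context; the overall composition runs in time polynomial in $|A|$ and $|B|$.

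For $\toZk$, I would iterate the $(\ch\fl)$ step starting from $\Sbar$ until reaching a fixpoint $\Shst$, then test emptiness. Correctness is immediate from the definition of $\Shst$. For complexity, the total mass $\sum_{C \in \Sk(A)} |\Sh_k^{(m)}(C)|$ is bounded by $|A|^k |B|^k$, is monotonically non-increasing (both $\ch$ and $\fl$ only delete sections), and strictly decreases at every non-terminating step, since a step that removed nothing would already witness a fixpoint. Hence the number of iterations is polynomial, and each is polynomial by the one-step analysis.

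The main subtlety, flagged in the paragraph preceding the statement, is that $\Ztest$ depends on its ambient presheaf: at iteration $m$ one must invoke $\Ztest(\Sh_k^{(m-1)}, s)$, not $\Ztest(\Sbar, s)$. This is harmless for the polynomial bound, because the size parameters $|A|$ and $|B|$ governing the associated linear system are unchanged across iterations (and in fact the number of variables only shrinks), but it must be handled correctly so that the algorithm truly computes the fixpoint of the operator $\fl\ch$ rather than a weaker filtration anchored to $\Sbar$.
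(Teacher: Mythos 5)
Your proposal is correct and matches the paper's (unstated) argument: the paper introduces the proposition with ``As already observed, we have:'' and leaves the assembly implicit, relying on the earlier observations that $\Ztest$ is polynomial-time, that $\Shch$ requires only polynomially many calls of $\Ztest$, and that the iteration $\Hk \linc \Hk^{\fl} \linc \Hk^{\fl\ch\fl} \linc \cdots$ terminates in polynomially many steps because $|\Hk| \leq |A|^k|B|^k$. You have simply spelled out that assembly, including the correct handling of the ambient-presheaf dependence of $\Ztest$, which the paper flags in the same paragraph.
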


We can regard these relations as approximations to the ``true'' homomorphism relation $A \to B$. The soundness of these approximations is stated as follows:
\begin{proposition} 
\label{chainimpprop}
We have the following chain of implications:
\[ A \to B \IMP A \toZk B \IMP A \toZko B \IMP A \tok B .\]
\end{proposition}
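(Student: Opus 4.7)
The plan is to split the chain into an easy part (the two rightmost implications) and a substantive part (the leftmost). For the easy part, the operators $(-)^{\ch}$ and $(-)^{\fl}$ are both subfunctors of the identity on subpresheaves of $\Hk$: $\Sh^{\ch} \inc \Sh$ by construction, and $\Sh^{\fl} \inc \Sh$ is the counit of the coreflection. Hence $\Sh_k^{(m+1)} = (\Sh_k^{(m)})^{\ch\fl} \inc \Sh_k^{(m)}$ for every $m$, so in particular $\Shst \inc \Shone \inc \Sh_k^{(0)} = \Sbar$. Non-emptiness propagates rightward along these inclusions, giving $A \toZk B \IMP A \toZko B \IMP A \tok B$.

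For the leftmost implication, given $h \colon A \to B$, I would introduce the subpresheaf $\Sh_h$ of $\Hk$ defined by $\Sh_h(C) = \{ h|_{C} \}$ for every $C \in \Sk(A)$, which is plainly flasque. The claim is $\Sh_h \inc \Sh_k^{(m)}$ for all $m$, proved by induction on $m$. The base case $m=0$ holds because $\Sh_h$ is a flasque subpresheaf of $\Hk$ and $\Sbar$ is the largest such. For the inductive step, using that $\Sh_h$ is flasque and $(-)^{\fl}$ picks out the largest flasque subpresheaf, it suffices to show $\Sh_h \inc (\Sh_k^{(m)})^{\ch}$. For $C \in \Sk(A)$ with $|C| < k$ this is immediate from the inductive hypothesis, and for $C \in \Mk(A)$ one must verify $\Ztest(\Sh_k^{(m)}, h|_{C})$. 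The witness I would supply is the family $\alpha = \{ \eta_{C'}(h|_{C'}) \}_{C' \in \Mk(A)}$ inside $\FZ \Sh_k^{(m)}$, which is well-defined by the inductive hypothesis. $\ZZ$-compatibility follows from naturality of $\eta$: restriction commutes with $\eta$, so for any $C', C'' \in \Mk(A)$ both restrictions of $\alpha_{C'}$ and $\alpha_{C''}$ to $C' \cap C''$ coincide with $1 \cdot h|_{C' \cap C''}$, and the normalisation $\alpha_{C} = 1 \cdot h|_{C}$ is immediate by construction. This closes the induction, yielding $\Sh_h \inc \Shst$ and hence $\Shst \neq \es$, i.e.\ $A \toZk B$.

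The only genuine obstacle is identifying the right $\ZZ$-compatible family to witness $\Ztest$ at each stage of the induction; once one sees that the linearised global section $\{\eta_{C'}(h|_{C'})\}$ coming from $h$ itself is the natural witness, and that the inductive hypothesis is exactly what is needed to keep it inside $\FZ \Sh_k^{(m)}$, the rest reduces to naturality of $\eta$ and the monotonicity already invoked in the easy direction.
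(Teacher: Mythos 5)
Your proof is correct and takes essentially the same approach as the paper: both derive the rightmost two implications from the deflationary nature of $(-)^{\fl}$ and $(-)^{\ch}$, and both use the $\ZZ$-linear family $\{\eta_C(h|_C)\}$ induced by the homomorphism $h$ to establish the leftmost implication. You are more explicit than the paper in spelling out the induction on $m$ showing that $\Sh_h$ survives every stage $\Sh_k^{(m)}$ of the iteration — the paper's terse phrase that this family ``extends each of its elements'' is implicitly relying on exactly this argument.
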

\begin{proof}
For the first implication, each homomorphism $h : A \to B$ gives rise to a compatible family, and hence, as already remarked, to a $\ZZ$-linear family $\{ \eta_{C}(h_C) \}_{C \in \Mk(A)}$ which extends each of its elements. The second and third implications are  immediate from the definitions.
\end{proof}

\section{Composition}

So far we have focussed on a single template structure $B$ and a single instance $A$. We now look at the global structure.
We can define a poset-enriched category (or locally posetal 2-category) $\Ck$ as follows:
\begin{itemize}
\item Objects are relational structures.
\item 1-cells $\Sh : A \to B$  are  flasque subpresheaves of $\HkAB$.
\item The local posetal structure (2-cells) is given by subpresheaf inclusions $\Sh' \inc \Sh$. 
\end{itemize}
The key question is how to compose 1-cells. We define a 2-functor $\Ck(A,B) \times \Ck(B,C) \to \Ck(A,C)$. Given $\Sh : A \to B$ and $\Tsh : B \to C$, we define $\Tsh \circ \Sh : A \to C$ as follows: for $U \in \Sk(A)$, $\Tsh \circ \Sh(U) := \{ t \circ s \mid s \in \Sh(U) \AND t \in \Tsh(\im s) \}$. This is easily seen to be monotone with respect to presheaf inclusions.

This composition immediately allows us to conclude that the  $\tok$ relation is transitive. We now wish to extend this to show that $\toZk$ 
is transitive. For guidance, we can look at a standard construction, of \emph{group (or monoid) ring} \cite{lang2012algebra}. Given a monoid $M$, with multiplication $m : M \times M \to M$, and a commutative ring $R$, we form the free $R$-module $R^{(M)}$. The key point is how to extend the multiplication on $M$ to $R^{(M)}$. For this, we recall the tensor product of $R$-modules, and the fact that $\FR$ takes products to tensor products, so that $R^{(M)} \otimes R^{(M)} \cong R^{(M \times M)}$. We can use the universal property of the tensor product \cite{mac2013categories}
\[ \begin{tikzcd}
R^{(M)} \times R^{(M)} \ar[r] \ar[d, "\otimes"'] & R^{(M)} \ar[d, leftarrow, "R^{(m)}"] \\
R^{(M)} \otimes R^{(M)} \ar[r, "\cong"] & R^{(M \times M)}
\end{tikzcd}
\]
to define $R^{(M)} \times R^{(M)} \to R^{(M)}$ as the unique bilinear map making the above diagram commute. Here $\otimes$ is the universal bilinear map:
\[ \otimes : (\sum_i r_i m_i, \sum_j s_j n_j) \mapsto \sum_{ij} r_i s_j (m_i,n_j) \]
giving rise to the standard formula
\[ (\sum_i r_i m_i) \cdot (\sum_j s_j n_j) \, = \, \sum_{ij} r_i s_j (m_i \cdot n_j) . \]
There is a subtlety in lifting this to the presheaf level, since there we have a \emph{dependent product}.

For the remainder of this section, we shall fix the following notation.
We are given $\Sh : A \to B$ and $\Tsh : B \to C$.
We define $\Sh \dprod \Tsh$ to be the presheaf on $\Sk(A)$ defined by $\sumST(U) := \{ (s,t) \mid s \in \Sh(U) \AND t \in \Tsh(\im s) \}$.
Restriction maps are defined as follows: given $U' \subseteq U$, $\rho^{U}_{U'} : (s,t) \mapsto (s |_{U'}, t |_{\im (s |_{U'})})$.
Functoriality can be verified straightforwardly.

We can now define a natural transformation $m : \sumST \natarrow \Tsh \circ \Sh$. For each $U \in \Sk(A)$, $m_U : (s,t) \mapsto t \circ s$. Given $U' \subseteq U$, naturality is the requirement that the following diagram commutes
\[ \begin{tikzcd}
\sumST (U) \ar[r, "m_U"] \ar[d, "\rho^{U}_{U'}"'] & \Tsh \circ \Sh(U) \ar[d, "\rho^{U}_{U'}"] \\
\sumST (U') \ar[r, "m_{U'}"]  & \Tsh \circ \Sh(U')
\end{tikzcd}
\]
which amounts to the equation $(t \circ s) |_{U'} \, = \, t |_{\im (s |_{U'})} \circ s |_{U'}$.

We note that the analogue of Proposition~\ref{compfamprop} holds for $\AbGrp$-valued presheaves:\footnote{In fact, the result holds for presheaves valued in any concrete category.}
\begin{proposition}
\label{gencompfamprop}
Let $\mathcal{F} : \Sk(A)^{\op} \to \AbGrp$.
There is a bijective correspondence between global sections  and $k$-compatible families of $\mathcal{F}$.
\end{proposition}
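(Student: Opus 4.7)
The plan is to reprise the proof of Proposition~\ref{compfamprop} almost verbatim, since the argument given there was essentially set-theoretic: only the poset $\Sk(A)$ and the restriction maps $\rho^{C}_{C'}$ play a role, while the group structure on the fibres $\mathcal{F}(C)$ is never used. Concretely, by ``global section'' of $\mathcal{F}$ I mean a family $\{\alpha_C\}_{C \in \Sk(A)}$ with $\alpha_C \in \mathcal{F}(C)$ satisfying $\rho^{C'}_{C}(\alpha_{C'}) = \alpha_C$ whenever $C \subseteq C'$; a $k$-compatible family is a family $\{\alpha_C\}_{C \in \Mk(A)}$ satisfying $\rho^{C}_{C \cap C'}(\alpha_C) = \rho^{C'}_{C \cap C'}(\alpha_{C'})$ for all $C, C' \in \Mk(A)$.

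In the forward direction, I would restrict a global section to $\Mk(A)$; the $k$-compatibility condition then follows at once from $\rho^{C}_{C \cap C'}(\alpha_C) = \alpha_{C \cap C'} = \rho^{C'}_{C \cap C'}(\alpha_{C'})$.

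For the backward direction, given a $k$-compatible family $\{\alpha_C\}_{C \in \Mk(A)}$ and an arbitrary $C' \in \Sk(A)$, I pick some $C_1 \in \Mk(A)$ with $C' \subseteq C_1$ (which exists since $|C'| \leq k$; the degenerate case $|A| < k$ reduces to $\Mk(A) = \{A\}$ and is trivial) and set $\alpha_{C'} := \rho^{C_1}_{C'}(\alpha_{C_1})$. The main thing to check is that this is independent of $C_1$. If $C_2 \in \Mk(A)$ also contains $C'$, then $C' \subseteq C_1 \cap C_2$, and functoriality of $\mathcal{F}$ combined with $k$-compatibility gives
\[
\rho^{C_1}_{C'}(\alpha_{C_1}) \;=\; \rho^{C_1 \cap C_2}_{C'}\!\bigl(\rho^{C_1}_{C_1 \cap C_2}(\alpha_{C_1})\bigr) \;=\; \rho^{C_1 \cap C_2}_{C'}\!\bigl(\rho^{C_2}_{C_1 \cap C_2}(\alpha_{C_2})\bigr) \;=\; \rho^{C_2}_{C'}(\alpha_{C_2}).
\]
A similar functoriality argument shows that the extended family is a global section, and one checks directly that the two passages are mutually inverse (the extension agrees with the original $\alpha_{C}$ on any $C \in \Mk(A)$ since one may take $C_1 = C$).

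There is no genuine obstacle here; the only point requiring any care is well-definedness of the extension to non-maximal contexts, and this is exactly what the $k$-compatibility condition is designed to secure. As the footnote anticipates, the argument goes through unchanged for presheaves valued in any concrete category, since the verification never inspects the algebraic structure on $\mathcal{F}(C)$.
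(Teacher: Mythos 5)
Your proposal is correct and matches the paper's approach exactly: the paper simply states that ``the proof is the same as that given for Proposition~\ref{compfamprop},'' and you have reproduced that argument, correctly observing that it only uses the poset structure and restriction maps, never the abelian-group structure on the fibres.
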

The proof is the same as that given for Proposition~\ref{compfamprop}.

\begin{proposition}\label{prodglobsecsprop}
Suppose we have global sections $\alpha : \One \natarrow \FZ \Sh$ with $\alpha_{U_0} = 1 \cdot s_0$, and $\beta : \One \natarrow \FZ \Tsh$ with $\beta_{\im s_0} = 1 \cdot t_0$. Then there is a global section $\alpha \dprod \beta : \One \natarrow \FZ (\sumST)$, with $(\alpha \dprod \beta)_{U_0} = 1 \cdot (s_0,t_0)$.
\end{proposition}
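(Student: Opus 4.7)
The plan is to mimic, pointwise on $\Sk(A)$, the group-ring/tensor-product construction recalled just before the statement, while accounting for the dependent product subtlety the author has already flagged.

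First, I would give the definition. For each $U \in \Sk(A)$, write $\alpha_U = \sum_i r_i \cdot s_i$ with $s_i \in \Sh(U)$, and for every $i$ expand $\beta_{\im s_i} = \sum_j r_{ij} \cdot t_{ij}$ with $t_{ij} \in \Tsh(\im s_i)$. Then set
\[ (\alpha \dprod \beta)_U \;:=\; \sum_{i,j} r_i r_{ij} \cdot (s_i, t_{ij}) \;\in\; \FZ(\sumST)(U). \]
This is the only natural formula compatible with composing two ``$\ZZ$-linear sections'': it is the $\ZZ$-linear extension of $m_U$ applied to the coherent pairing of the supports of $\alpha_U$ with the $\beta_{\im s_i}$'s, and it reduces to the standard tensor product multiplication in the non-dependent case.

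Second, the normalisation at $U_0$ is immediate: since $\alpha_{U_0} = 1 \cdot s_0$ has a single term and $\beta_{\im s_0} = 1 \cdot t_0$ has a single term, the formula collapses to $1 \cdot (s_0, t_0)$.

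Third---the only substantive check---I would verify naturality. Given $U' \subseteq U$, applying the restriction on $\sumST$ to the left-hand side gives $\sum_{i,j} r_i r_{ij} \cdot (s_i|_{U'}, t_{ij}|_{\im(s_i|_{U'})})$. I would then regroup the terms by the common value $s := s_i|_{U'} \in \Sh(U')$. The coefficient collected on the first factor is $\sum_{s_i|_{U'} = s} r_i$, which is precisely the coefficient of $s$ in $\rho^U_{U'}(\alpha_U) = \alpha_{U'}$ by naturality of $\alpha$. For the second factor, for each $i$ with $s_i|_{U'} = s$ one has $\im s \subseteq \im s_i$, so the inner sum $\sum_j r_{ij} \cdot (t_{ij}|_{\im s})$ equals $\rho^{\im s_i}_{\im s}(\beta_{\im s_i}) = \beta_{\im s}$ by naturality of $\beta$. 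The regrouped expression therefore matches $(\alpha \dprod \beta)_{U'}$ as defined directly, proving naturality.

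The main obstacle is precisely the dependent-product issue flagged in the surrounding text: one cannot perform the construction as a clean, fibrewise tensor because the domain $\im s$ of $\beta_{\im s}$ varies with $s$. The crux of the proof is the last sentence of the third step---without naturality of $\beta$, the restrictions of the various $\beta_{\im s_i}$ to a common $\im s$ would fail to agree, and the regrouping needed to match $\alpha_{U'}$ would break. Once this single point is identified, everything else is routine bookkeeping.
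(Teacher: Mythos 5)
Your proposal is correct and follows essentially the same route as the paper's proof: the same definition of $\alpha \dprod \beta$ (written with indices $r_i, r_{ij}$ rather than coefficients indexed by sections), the immediate collapse at $U_0$, and a naturality check that compares coefficients after applying $\FZ(\rho^U_{U'})$, using naturality of $\alpha$ for the first factor and naturality of $\beta$ (restricted to $\im s$) for the second. Your regrouping-by-$s := s_i|_{U'}$ presentation is just a reorganization of the coefficient-by-coefficient comparison in the paper, and you have correctly identified the key point where the dependent-product subtlety is resolved.
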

\begin{proof}
We define $(\alpha \dprod \beta)_{U} := \sum_{s \in \Sh(U)} \sum_{t \in \Tsh(\im s)} \alpha_s \beta_t (s,t)$, where 
$\alpha_U = \sum_{s \in \Sh(U)} \alpha_s s$ and $\beta_{\im s} = \sum_{t \in \Tsh(\im s)} \beta_t t$.

Clearly, $(\alpha \dprod \beta)_{U_0} = 1 \cdot (s_0,t_0)$. It remains to verify naturality.

Given $U' \subseteq U$, by naturality of $\alpha$ and $\beta$, for each $s' \in \Sh(U')$,  $t' \in \Tsh(\im s')$:
\[ \alpha_{s'} \; = \sum_{\substack{s \in \Sh(U), \\ s |_{U'} = s'}} \alpha_s, \qquad \beta_{t'} \; = \sum_{\substack{t \in \Tsh(\im s), \\ t |_{\im s'} = t'}} \beta_t . \]
Now given $s' \in \Sh(U')$ and $t' \in \Tsh(\im s')$, we compare the coefficient $\alpha_{s'}\beta_{t'}$ of $(s',t')$ in $(\alpha \dprod \beta)_{U'}$ with that in
$\FZ (\rho^{U}_{U'})((\alpha \dprod \beta)_{U})$. We have
\begin{align*}
\FZ (\rho^{U}_{U'})((\alpha \dprod \beta)_{U})(s',t') \; &= \sum_{\substack{s \in \Sh(U), \\ s |_{U'} = s'}} \; \sum_{\substack{t \in \Tsh(\im s), \\ t |_{\im s'} = t'}} \alpha_t \beta_t \\
&= \sum_{\substack{s \in \Sh(U), \\ s |_{U'} = s'}} \alpha_t \; (\sum_{\substack{t \in \Tsh(\im s), \\ t |_{\im s'} = t'}} \beta_t) \\
&= \sum_{\substack{s \in \Sh(U), \\ s |_{U'} = s'}} \alpha_t \beta_{t'} \\
&= (\sum_{\substack{s \in \Sh(U), \\ s |_{U'} = s'}} \alpha_t) \beta_{t'} \\
&= \alpha_{s'}\beta_{t'} .
\end{align*}
\end{proof}

\begin{proposition}
If $\Sh^{\Box} = \Sh$ and $\Tsh^{\Box} = \Tsh$, then $(\Tsh \circ \Sh)^{\Box} = \Tsh \circ \Sh$.
\end{proposition}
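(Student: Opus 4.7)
Unpacking $\Sh^{\Box} = \Sh$, the hypothesis says that $\Sh$ is flasque and, for every $U_0 \in \Mk(A)$ and every $s_0 \in \Sh(U_0)$, the local section $s_0$ passes $\Ztest(\Sh,\cdot)$; similarly for $\Tsh$ over $\Sk(B)$. The plan is to verify the same two properties for $\Tsh \circ \Sh$.

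Flasqueness of $\Tsh \circ \Sh$ I would check directly from the definition of composition: given $u = t \circ s \in (\Tsh \circ \Sh)(U)$ with $|U| < k$ and $a \in A$, flasqueness of $\Sh$ extends $s$ to some $s' \in \Sh(U \cup \{a\})$; since $\im s \subseteq \im s'$ with $|\im s'| \leq k$, flasqueness of $\Tsh$ extends $t$ to some $t' \in \Tsh(\im s')$. Then $t' \circ s' \in (\Tsh \circ \Sh)(U \cup \{a\})$ restricts to $u$ by the naturality identity $(t' \circ s')|_U = t'|_{\im(s'|_U)} \circ s'|_U = t \circ s$ already recorded in the discussion of $m$.

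For the cohomological condition, I would fix $U_0 \in \Mk(A)$ and $u \in (\Tsh \circ \Sh)(U_0)$, and write $u = t_0 \circ s_0$ with $s_0 \in \Sh(U_0)$ and $t_0 \in \Tsh(\im s_0)$. From $\Sh^{\Box} = \Sh$, together with Propositions~\ref{gammacompfamprop} and~\ref{gencompfamprop}, I obtain a global section $\alpha : \One \natarrow \FZ \Sh$ with $\alpha_{U_0} = 1 \cdot s_0$. To invoke Proposition~\ref{prodglobsecsprop} I need a counterpart $\beta : \One \natarrow \FZ \Tsh$ with $\beta_{\im s_0} = 1 \cdot t_0$, and this is the delicate step: the hypothesis $\Tsh^{\Box} = \Tsh$ only directly yields $\ZZ$-compatible extensions anchored at \emph{maximal} contexts of $\Sk(B)$, whereas $\im s_0$ need not be maximal. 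I would bridge this via flasqueness of $\Tsh$: extend $t_0$ to some $\tilde t_0 \in \Tsh(V_0)$ with $V_0 \in \Mk(B)$ and $V_0 \supseteq \im s_0$, apply $\Tsh^{\Box} = \Tsh$ to obtain a global section $\beta$ with $\beta_{V_0} = 1 \cdot \tilde t_0$, and then observe that by naturality $\beta_{\im s_0} = \FZ(\rho^{V_0}_{\im s_0})(1 \cdot \tilde t_0) = 1 \cdot t_0$.

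With $\alpha$ and $\beta$ in hand, Proposition~\ref{prodglobsecsprop} produces $\alpha \dprod \beta : \One \natarrow \FZ(\sumST)$ with $(\alpha \dprod \beta)_{U_0} = 1 \cdot (s_0,t_0)$, and pushing forward along $\FZ(m) : \FZ(\sumST) \natarrow \FZ(\Tsh \circ \Sh)$ yields a global section of $\FZ(\Tsh \circ \Sh)$ whose value at $U_0$ is $1 \cdot u$. Its restriction to $\Mk(A)$ is a $\ZZ$-compatible extension of $u$ in $\Tsh \circ \Sh$, so $u$ passes $\Ztest(\Tsh \circ \Sh,\cdot)$. The main obstacle is exactly the bridging step: cohomological consistency is phrased at maximal contexts, but the composite $\Tsh \circ \Sh$ naturally produces non-maximal contexts of $\Sk(B)$ (the images $\im s_0$), so flasqueness of $\Tsh$ has to do genuine work to couple the two hypotheses.
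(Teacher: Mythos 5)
Your argument is correct and follows the paper's route---build global sections $\alpha$ and $\beta$ from the hypotheses, combine them via Proposition~\ref{prodglobsecsprop}, and push the result forward along $\FZ(m)$. But you have also noticed, and correctly filled, a gap that the paper glosses over. The paper's proof asserts directly that $\Tsh^{\Box} = \Tsh$ yields a global section $\beta : \One \natarrow \FZ \Tsh$ with $\beta_{\im s_0} = 1 \cdot t_0$; however, since $s_0$ need not be injective, $\im s_0$ need not be a maximal context in $\Sk(B)$, whereas $\Ztest$ (and hence the fixpoint hypothesis) speaks only about local sections at maximal contexts. Your bridging step---extend $t_0$ along the flasqueness of $\Tsh$ to some $\tilde t_0$ at a maximal $V_0 \supseteq \im s_0$, obtain a global section $\beta$ with $\beta_{V_0} = 1 \cdot \tilde t_0$, and then recover $\beta_{\im s_0} = 1 \cdot t_0$ by naturality of $\beta$ together with the action of $\FZ$ on restriction maps---is exactly what is needed to make the step rigorous. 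Your explicit verification that $\Tsh \circ \Sh$ is flasque is likewise appropriate: the cohomological condition is defined only for flasque subpresheaves, and the paper leaves this step implicit in the definition of the 2-category $\Ck$.
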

\begin{proof}
For $U_0 \in \Mk(A)$ and $t_0 \circ s_0 \in \Tsh \circ \Sh(U)$, we must show that $\Ztest(\Tsh \circ \Sh, t_0 \circ s_0)$. By Proposition~\ref{gencompfamprop}, it suffices to show that there is a global section $\gamma : \One \natarrow \FZ (\Tsh \circ \Sh)$ with $\gamma_U = 1 \cdot t_0 \circ s_0$. By assumption, and using Proposition~\ref{gencompfamprop} again, we have global sections $\alpha : \One \natarrow \FZ \Sh$ with $\alpha_{U_0} = 1 \cdot s_0$, and $\beta : \One \natarrow \FZ \Tsh$ with $\beta_{\im s_0} = 1 \cdot t_0$.
By Proposition~\ref{prodglobsecsprop}, we have $\alpha \dprod \beta : \One \natarrow \FZ (\sumST)$, with $(\alpha \dprod \beta)_{U_0} = 1 \cdot (s_0,t_0)$. Composing $\alpha \dprod \beta$ with $\FZ m : \FZ(\sumST) \natarrow \FZ(\Tsh \circ \Sh)$ yields $\gamma$ as required.
\end{proof}

As an immediate corollary, we obtain:
\begin{proposition}
The relation $\toZk$ is transitive.
\end{proposition}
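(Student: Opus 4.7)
The plan is to compose the fixpoint-level witnesses for $A \toZk B$ and $B \toZk C$ and then show the result sits inside the fixpoint for the pair $(A,C)$. Concretely, writing $\Shst^{AB}$ and $\Shst^{BC}$ for the two non-empty fixpoints given by hypothesis, I would form their composition $\Shst^{BC} \circ \Shst^{AB}$ using the operation defined earlier in this section, then verify that it is a non-empty flasque subpresheaf of $\Hk^{AC}$ satisfying $\Sh^{\Box} = \Sh$. The conclusion $A \toZk C$ will then follow from an embedding argument showing that any such subpresheaf is contained in $\Shst^{AC}$.

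For the $\Box$-closure, I would apply the preceding proposition directly with $\Sh := \Shst^{AB}$ and $\Tsh := \Shst^{BC}$; both satisfy $\Sh^{\Box} = \Sh$ by definition of the fixpoint $\Shst$. For flasqueness, given $t \circ s \in (\Shst^{BC} \circ \Shst^{AB})(U)$ and $U \subseteq U'$ in $\Sk(A)$, I would first use flasqueness of $\Shst^{AB}$ to lift $s$ to $s' \in \Shst^{AB}(U')$ with $s'|_U = s$, then observe $\im s \subseteq \im s'$ and use flasqueness of $\Shst^{BC}$ to lift $t$ to $t' \in \Shst^{BC}(\im s')$ with $t'|_{\im s} = t$; a direct check gives $(t' \circ s')|_U = t \circ s$. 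For non-emptiness, I would invoke the earlier propagation-of-inconsistency proposition: any non-empty flasque subpresheaf is non-empty on every non-empty context. Pick any $U_0 \in \Mk(A)$, choose $s_0 \in \Shst^{AB}(U_0)$, note $\im s_0 \in \Sk(B) \setminus \{ \es \}$, pick $t_0 \in \Shst^{BC}(\im s_0)$, and observe that $t_0 \circ s_0 \in (\Shst^{BC} \circ \Shst^{AB})(U_0)$.

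The step I expect to need the most care is the embedding $\Shst^{BC} \circ \Shst^{AB} \inc \Shst^{AC}$. This reduces to the general fact that every flasque $\Box$-closed subpresheaf $\Sh$ of $\Hk^{AC}$ is contained in $\Shst^{AC}$. Both $\fl$ and $\ch$ are monotone with respect to subpresheaf inclusion: $\fl$ because the largest flasque subpresheaf of a smaller presheaf is a flasque subpresheaf of any larger ambient, and $\ch$ because enlarging the ambient presheaf can only add solutions to the linear system defining $\Ztest$ (this is essentially the content of the proposition stated earlier in this section). An induction along the iteration $\Hk^{AC} \linc (\Hk^{AC})^{\fl} \linc (\Hk^{AC})^{\fl\ch\fl} \linc \cdots$ then shows that $\Sh$ is contained in the $m$'th iterate for every $m$, hence in the fixpoint $\Shst^{AC}$. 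Applying this with $\Sh = \Shst^{BC} \circ \Shst^{AB}$ gives $\Shst^{AC} \neq \es$, so $A \toZk C$.
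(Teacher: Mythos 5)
Your proof is correct and follows the same route as the paper, which derives this proposition as an ``immediate corollary'' of the preceding result that compositions of $\Box$-closed presheaves are $\Box$-closed. You usefully spell out the steps the paper leaves implicit: that the composed fixpoint $\Shst^{BC} \circ \Shst^{AB}$ is flasque, non-empty (via the propagation-of-inconsistency proposition), and $\Box$-closed, and that any such subpresheaf must be contained in $\Shst^{AC}$ by monotonicity of $\fl$ and $\ch$ along the defining iteration.
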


\section{Cohomological reduction}\label{cohomredsec}

We now briefly outline the cohomological content of the construction $\Sh \mapsto \Shch$, which we shall call \emph{cohomological reduction}, since $\Shch \inc \Sh$.
We refer to \cite{DBLP:conf/csl/AbramskyBKLM15} for further details.

Given a flasque subpresheaf $\Sh$ of $\Hk$, we have the abelian-group-valued presheaf $\FF = \FZ \Sh$. We use the \Cech cohomology with respect to the cover $\MM = \Mk(A)$.
In order to focus attention at the context $C \in \MM$, we use the presheaf $\FF |_{C}$, which ``projects'' onto $C$. The cohomology of this presheaf is the \emph{relative cohomology} of $\FF$ at $C$.
The $i$'th relative \Cech cohomology group of $\FF$ is written as $\Cohom{i}$.
We have the \emph{connecting homomorphism} $\Cohom{0} \to \Cohom{1}$ constructed using the Snake Lemma of homological algebra.
The cohomological obstruction $\gamma : \FF(C) \to \Cohom{1}$ defined in \cite{DBLP:conf/csl/AbramskyBKLM15} is this connecting homomorphism, composed with the isomorphism $\FF(C) \cong \Cohom{0}$.

Using Proposition~\ref{gammacompfamprop}, the predicate $\Ztest(\Sh, s)$ is equivalent to $\gamma \circ \eta(s) = 0$, \ie $\eta(s) \in \ker \gamma$.
Thus we can define $\Shch(C)$ as the pullback (in $\Set$):
\[ \begin{tikzcd}
\Shch(C) \ar[d, hookrightarrow]  \arrow[dr, phantom, "\lrcorner", very near start] \ar[r] & U (\ker \gamma) \ar[d, hookrightarrow] \\
\Sh(C) \ar[r, "\eta"] & U \FZ \Sh (C)
\end{tikzcd}
\]

\section{Relation to contextuality conditions}

We recall  one of the contextuality properties studied in \cite{DBLP:conf/csl/AbramskyBKLM15}.
In the present setting, we can define this as follows. If $\Sh$ is a flasque subpresheaf of $\Hk$, then:
\begin{itemize}
\item $\Sh$ is \emph{cohomologically strongly contextual}  $(\CSC(\Sh))$ if
 \[ \forall C \in \Mk(A). \, \forall s \in \Sh(C). \, \neg \Ztest(s) . \]
\end{itemize}

We shall write $A \ntoZk B \, := \, \neg(A \toZk B)$, and similarly $A \ntoZko B$.
\begin{proposition}\label{CSCntoZkoprop}
$\CSC(\Sbar) \IMP A \ntoZko B$.
\end{proposition}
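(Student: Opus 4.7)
The plan is to show $\Shone = \es$ by unfolding the first step of the cohomological $k$-consistency iteration and then invoking the earlier propagation-of-inconsistency result.

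First I would recall that, by construction, $\Shone = \Sbar^{\ch\fl}$, and that the operation $\Sh \mapsto \Shch$ keeps only those $s \in \Sh(C)$ (for maximal $C \in \Mk(A)$) satisfying $\Ztest(\Sh, s)$, while leaving values at non-maximal contexts untouched. The hypothesis $\CSC(\Sbar)$ says exactly that $\Ztest(\Sbar, s)$ fails for every $s \in \Sbar(C)$ and every $C \in \Mk(A)$, so $\Sbar^{\ch}(C) = \es$ for every $C \in \Mk(A)$. The remark immediately after the definition of $\Shch$ confirms that $\Sbar^{\ch}$ is itself a genuine presheaf on $\Sk(A)$, so it is eligible input for the flasque-coreflector.

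Next I would invoke the propagation proposition in the section on local consistency as coflasquification: if a presheaf on $\Sk(A)$ vanishes at any $C \in \Sk(A) \setminus \{\es\}$, then its largest flasque subpresheaf is empty. Picking any $C \in \Mk(A)$ --- non-empty, since we are under the standing assumption $1 \leq k \leq |A|$ --- this yields $\Sbar^{\ch\fl} = \es$. Hence $\Shone = \es$, which is by definition $A \ntoZko B$.

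There is no real obstacle: the cohomological substance is already packaged inside the definition of $\Ztest$ and inside the propagation proposition, so what remains is pure definition-chasing. The only point that is worth double-checking in the written-out proof is that $\Sbar^{\ch}$ is closed under restriction, so that the propagation proposition can be applied to it --- and this is precisely the observation made immediately after $\Shch$ is introduced, since for $C \in \Mk(A)$ any proper restriction $C' \subsetneq C$ satisfies $|C'| < k$, where $\Shch$ agrees with $\Sh$.
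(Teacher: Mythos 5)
Your proposal is correct and follows essentially the same route as the paper: you pass from $\CSC(\Sbar)$ to $\Sbar^{\ch}(C) = \es$ for all $C \in \Mk(A)$, then invoke the earlier propagation-of-inconsistency proposition to conclude $\Sbar^{\ch\fl} = \Shone = \es$. The paper's proof compresses the second step into a single ``this implies,'' and you are right that the propagation proposition is exactly what justifies it; your only minor overstatement is the reference to a ``standing assumption $1 \leq k \leq |A|$,'' which the paper does not actually declare (only $k \geq n$ is fixed), though the implicit requirement $\Mk(A) \neq \es$ is indeed needed for both your argument and the paper's.
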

\begin{proof}
If $\CSC(\Sbar)$, then $\Sbar^{\Box}(C) = \es$ for all $C \in \Mk(A)$. This implies that $\Sbar^{\Box\Diamond}(C) = \es$ for all $C \in \Sk(A)$, \ie $\Shone = \mathbf{\es}$, and hence $A \ntoZko B$.
\end{proof}
\section{Linear templates: the power of one iteration}
We now consider the case where the template structure $B$ is linear. This means that $B = R$ is a finite ring, and the interpretation of each relation in $\sg$ on $R$ has the form $\Eab^{R}(r_1,\ldots,r_n) \equiv \sum_{i=1}^n a_i r_i = b$, for some $\va \in R^n$ and $b \in R$. Thus we can label each relation in $\sg$ as $\Eab$, where $\va$, $b$ correspond to the interpretation of the relation in $R$.

Given an instance $A$, we can regard each tuple $\vx \in A^n$ such that $\Eab^{A}(x_1,\ldots, x_n)$ as the equation $\sum_{i=1}^n a_i x_i = b$. The set of all such equations is denoted by $\TA$.
We say that a function $f : A \to R$ \emph{satisfies} this equation if $\sum_{i=1}^n a_i f(x_i) = b$ holds in $R$, \ie if  $\Eab^{A}(f(x_1),\ldots, f(x_n))$. It is then immediate that a function $f : A \to R$ simultaneously satisfies all the equations in $\TA$ iff it is a homomorphism.

We can also associate a set of equations with each context $C \in \Mk(A)$. We say that $\Sbar(C)$ satisfies the equation $e_{\vc,d} := \sum_{i=1}^n c_i x_i = d$ if for all $s \in \Sbar(C)$, $s$ satisfies $e_{\vc,d}$, where $\{ x_1, \dots , x_n \} \subseteq C$,
$\vc \in R^n$ and $d \in R$. 
Note that we do \emph{not} require that there is a corresponding relation $E_{\vc,d}$ in $\sg$. We write $\TC$ for the set of all equations satisfied by $\Sbar(C)$, and $\TS := \bigcup_{C \in \Mk} \TC$.
We say that $\Sbar$ satisfies \emph{All-versus-Nothing contextuality} ($\AvN_{R}(\Sbar)$) \cite{DBLP:conf/csl/AbramskyBKLM15} if there is no function $s : A \to R$ which satisfies all the equations in $\TS$.
\begin{proposition}
\label{ntoavnprop}
If $A \nto R$, then $\AvN_{R}(\Sbar)$.
\end{proposition}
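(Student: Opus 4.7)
The plan is to prove the contrapositive: I will assume that there exists a function $s : A \to R$ satisfying every equation in $\TS$, and show that $s$ is itself a homomorphism $A \to R$. The bridge will be to establish that every relation instance of $A$ contributes a corresponding linear equation to $\TS$, so that satisfying $\TS$ forces satisfying $\TA$, which by the observation at the start of this section is the same as being a homomorphism.

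The main technical step will be the following claim: for every relation instance $\Eab^{A}(x_1,\ldots,x_n)$ holding in $A$, the linear equation $\sum_{i=1}^n a_i x_i = b$ belongs to $\TS$. To prove this I would set $C_0 := \{x_1,\ldots,x_n\}$ and, using the Fixed Assumption $k \geq n$, choose some $C \in \Mk(A)$ with $C_0 \subseteq C$. For any local section $t \in \Sbar(C)$, the restriction $t|_{C_0}$ is a partial homomorphism $C_0 \to R$, so it preserves $\Eab$; unpacking the template interpretation $\Eab^{R}$ as $\sum_i a_i r_i = b$, this is exactly the assertion that $t$ satisfies the equation on $x_1,\ldots,x_n$. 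Hence the equation lies in $\TC$, and thus in $\TS$.

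Once this claim is in place, the conclusion is immediate: since $s$ satisfies every equation in $\TS$, we get $\sum_i a_i s(x_i) = b$ for every relation instance $\Eab^{A}(x_1,\ldots,x_n)$, which is precisely preservation of all the relations in $\sg$; hence $s$ is a homomorphism $A \to R$, contradicting $A \nto R$.

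I do not anticipate any serious obstacle: the argument is essentially an unpacking of definitions, leaning on the fact that local sections of $\Sbar$ are partial homomorphisms and that the template interpretation is linear. The only point that will require a brief remark is the degenerate case in which $\Sbar$ is empty (e.g.\ when $A \not\to_{k} R$): then each $\Sbar(C)$ vacuously satisfies every equation, so $\TS$ already contains unsatisfiable equations such as $0\cdot x = 1$, and $\AvN_R(\Sbar)$ holds trivially. One should also note the implicit assumption $|A| \geq k$ needed for $\Mk(A)$ to be nonempty, which is the standard case.
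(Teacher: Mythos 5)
Your proof is correct and follows essentially the same route as the paper. The paper states directly that $\TA \subseteq \TS$ (which follows from $k \geq n$) and then notes that $A \nto R$ is precisely unsatisfiability of $\TA$; your contrapositive is the same argument with the inclusion $\TA \subseteq \TS$ spelled out via the partial-homomorphism property of local sections, which is the correct justification that the paper leaves implicit. The remarks on the degenerate cases ($\Sbar$ empty, $|A| < k$) are sound but not needed for the main argument.
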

\begin{proof}
Since we are assuming that $k \geq n$, where $n$ is the relational width of $\sg$, we have $\TA \subseteq \TS$, and hence any function satisfying $\TS$ will also satisfy $\TA$. As already observed, $A \nto B$ is equivalent to the statement that there is no satisfying assignment for $\TA$.
\end{proof}

We can now state another  important result from \cite{AOC2021}: that cohomological $k$-consistency is an \emph{exact condition} for linear templates.
Moreover, the key step in the argument is the main result from  \cite{DBLP:conf/csl/AbramskyBKLM15}.
\begin{proposition}
\label{linearprop}
For every linear template $R$, and instance $A$:
\[ A \to R \; \IFF \; A \toZk R \; \IFF \; A \toZko R . \]
\end{proposition}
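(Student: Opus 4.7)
The plan is to close the chain. By Proposition~\ref{chainimpprop} we already have
\[ A \to R \IMP A \toZk R \IMP A \toZko R, \]
so it suffices to establish $A \toZko R \IMP A \to R$, which I will prove in contrapositive form: $A \nto R \IMP A \ntoZko R$.

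The first step is to invoke Proposition~\ref{ntoavnprop}: assuming $A \nto R$ gives $\AvN_{R}(\Sbar)$, \ie no function $s : A \to R$ satisfies every equation in $\TS$. The second step is to upgrade All-versus-Nothing contextuality to cohomological strong contextuality, \ie to show $\AvN_{R}(\Sbar) \IMP \CSC(\Sbar)$. This is where the main work sits, and it is exactly the content of the key theorem of~\cite{DBLP:conf/csl/AbramskyBKLM15}: for linear (\ie $R$-module valued) empirical models, an AvN witness produces, for every context $C$ and every $s \in \Sbar(C)$, a nonzero cohomological obstruction $\gamma(\eta(s)) \neq 0$. Intuitively, any putative $\ZZ$-compatible extension of $s$ would, after projecting coefficients into $R$ (using the $R$-module structure pulled back through $\FZ \to \FR$), yield a function $A \to R$ satisfying the equations in $\TS$, contradicting $\AvN_{R}(\Sbar)$; this forces $\neg \Ztest(\Sbar, s)$ for all $C \in \Mk(A)$ and $s \in \Sbar(C)$, which is precisely $\CSC(\Sbar)$.

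The third step is immediate: apply Proposition~\ref{CSCntoZkoprop} to conclude $A \ntoZko R$. Combining the three steps gives the missing implication and hence the equivalences claimed.

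The main obstacle is step two, and it is an obstacle only in the sense of being the place where a non-trivial external fact is imported. The argument is not really about the iterative refinement $\Sh \mapsto \Shch$ at all: the AvN $\Rightarrow$ CSC theorem guarantees that a single application of the cohomological filter $\ch$ already empties $\Sbar$ when $A \nto R$, which is why the \emph{one-step} cohomological $k$-consistency $\toZko$ already coincides with $\to$ in the linear case. Consequently, no further iterations can separate $\toZk$ from $\toZko$ here, and all three relations collapse to $\to$, as stated.
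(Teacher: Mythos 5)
Your proof is correct and follows essentially the same route as the paper: use Proposition~\ref{chainimpprop} for the forward implications, then close the cycle via Proposition~\ref{ntoavnprop} ($A \nto R \Rightarrow \AvN_R(\Sbar)$), the AvN $\Rightarrow$ CSC theorem from \cite{DBLP:conf/csl/AbramskyBKLM15} (Theorem 6.1), and Proposition~\ref{CSCntoZkoprop}. Your closing remark correctly identifies why a single cohomological filtering step already suffices in the linear case, which matches the paper's structure exactly.
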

\begin{proof}
The forward implications are given by Proposition~\ref{chainimpprop}.
Now suppose that $A \nto R$. By Proposition~\ref{ntoavnprop}, this implies that $\AvN_{R}(\Sbar)$. By \cite[Theorem 6.1]{DBLP:conf/csl/AbramskyBKLM15}, this implies that $\CSC(\Sbar)$. 
By Proposition~\ref{CSCntoZkoprop}, this implies $A \ntoZko B$. By Proposition~\ref{chainimpprop} again, this implies $A \ntoZk R$.
\end{proof}

\section{Cohomological width}

We recall that a CSP template structure $B$ is said to have \emph{width} $\leq k$ (written as $\Wk(B)$) if, for all instances $A$:
\[ A \to B \; \IFF \; A \tok B . \]
Thus the templates of bounded width are those for which $\CSP(B)$ has an exact polynomial-time solution given by determination of strong $k$-consistency for some $k$.

In their seminal paper \cite{feder1998computational} Feder and Vardi identified two tractable subclasses of CSP, those with templates of bounded width, and those which are ``subgroup problems'' in their terminology, \ie essentially those with linear templates. Since all other cases with known complexity at that time were NP-complete, this motivated their famous Dichotomy Conjecture, recently proved by Bulatov \cite{bulatov2017dichotomy} and Zhuk \cite{zhuk2020proof}.

The two tractable classes identified by Feder and Vardi appeared to be quite different in character. However, we can use the preceding cohomological analysis to give a unified account.
We define the \emph{cohomological width} of a template structure $B$ to be $\leq k$ (notation: $\WZk(B)$) if, for all instances $A$:
\[ A \to B \; \IFF \; A \toZk B . \]
\begin{proposition}
If $B$ has bounded cohomological width, \ie $\WZk(B)$ for some $k$, then $\CSP(B)$ is in PTIME.
Moreover, both the Feder-Vardi classes of bounded width and linear templates have bounded cohomological width.
\end{proposition}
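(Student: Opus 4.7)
The proof breaks into three largely independent pieces that can all be assembled from results already established in the excerpt.

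First, for the tractability claim, the plan is simply to invoke Proposition~\ref{ckconprop}: if $\WZk(B)$ holds for some fixed $k$, then deciding $A \to B$ is equivalent to deciding $A \toZk B$, and the latter runs in time polynomial in $|A|, |B|$. Since $B$ (and hence $k$) is fixed as part of the template, the bound $|A|^k$ is polynomial in the input size.

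For the Feder--Vardi bounded-width class, I would chain the implications in Proposition~\ref{chainimpprop}. If $\Wk(B)$ holds, then (after enlarging $k$ if necessary so that $k \geq n$, using that $\tok$ gets stronger as $k$ grows while $\to$ is unchanged) we have $A \tok B \IMP A \to B$. Combined with $A \to B \IMP A \toZk B \IMP A \tok B$ from Proposition~\ref{chainimpprop}, this gives $A \to B \IFF A \toZk B$, which is exactly $\WZk(B)$.

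For the linear-template class, the statement $\WZk(R)$ is essentially immediate from Proposition~\ref{linearprop}, which already asserts $A \to R \IFF A \toZk R$ for every linear template $R$ (with $k \geq n$ the relational width, as per the fixed background assumption).

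The only real subtlety, and the one point where I would pause to check details, is making sure that a single value of $k$ works throughout, \ie~that the $k$ witnessing $\Wk(B)$ can be taken $\geq n$ so that Proposition~\ref{chainimpprop} applies, and that enlarging $k$ preserves the bounded-width property. The latter follows because $A \tok B$ is a stronger condition as $k$ increases (any winning positional strategy for the existential $k'$-pebble game with $k' \geq k$ restricts to one for the $k$-pebble game, via the down-closure property of Section~2), so $\Wk(B)$ implies $\Wk'(B)$ for all $k' \geq k$. No serious technical obstacle arises; the proposition is essentially a synthesis of the chain of approximations with the two exactness results already proved.
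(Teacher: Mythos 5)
Your proposal is correct and follows essentially the same route as the paper: Proposition~\ref{ckconprop} for tractability, the implication chain of Proposition~\ref{chainimpprop} (sandwiching $\toZk$ between $\to$ and $\tok$) to get $\Wk(B) \Rightarrow \WZk(B)$, and Proposition~\ref{linearprop} for linear templates. The only addition is your explicit check that one may enlarge $k$ to satisfy the standing assumption $k \geq n$ while preserving $\Wk$; the paper leaves this implicit under its fixed background assumption, but your monotonicity argument is the right justification.
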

\begin{proof}
The first statement follows from Proposition~\ref{ckconprop}. Note that this provides a simple, uniform algorithm which gives an exact solution for $\CSP(B)$ whenever $B$ has bounded cohomological width.
For the second statement, from Proposition~\ref{chainimpprop} we have the implication $\Wk(B) \Rightarrow \WZk(B)$, hence bounded width implies bounded cohomological width.
Finally, Proposition~\ref{linearprop} implies that linear templates have bounded cohomological width.
\end{proof}

For each template structure $B$, either $\CSP(B)$ is NP-complete, or $B$ admits a weak near-unanimity polymorphism  \cite{maroti2008existence}. 
In \cite{zhuk2020proof}, Zhuk shows that if $B$ admits a weak near-unanimity polymorphism, there is a polynomial-time algorithm for $\CSP(B)$, thus establishing the Dichotomy Theorem.
This result motivates the following question (see also \cite{AOC2021}):
\begin{question}
Is is the case that for all structures $B$, if $B$ has a weak near unanimity polymorphism, then it has bounded cohomological width?
\end{question}
A positive answer to this question would give an alternative proof of the Dichotomy Theorem.

\section{Presheaf representations of logical equivalences}

As we have seen, the local consistency relation $A \tok B$ approximates the homomorphism relation $A \to B$.
By standard results (cf.~Proposition~\ref{pebbprop} and \cite{abramsky2017pebbling}), $A \tok B$ iff every $k$-variable existential positive sentence satisfied by $A$ is satisfied by $B$. We now consider presheaf representations of logical equivalence for richer logical fragments.

\subsection{Existential logic $\ELk$}

As a first step, we consider $\Ik$, the presheaf of \emph{partial isomorphisms}. For each $C \in \Sk(A)$, $\Ik(C)$ is the set of partial isomorphisms from $A$ to $B$ with domain $C$. This is a subpresheaf of $\Hk$.

We can now consider $\Ikfl$, the coflasquification of $\Ik$.
We have the following analogue of Proposition~\ref{homglobsecsprop}:
\begin{proposition}
Suppose that $k \geq n$, where $n$ is  the maximum arity of any relation in $\sg$. There is a bijective correspondence between
\begin{enumerate}
\item embeddings $A \embed B$
\item global sections of $\Ikfl$.
\end{enumerate}
\end{proposition}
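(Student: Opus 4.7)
My plan is to follow the proof of Proposition~\ref{homglobsecsprop} very closely, augmenting it with the two extra conditions characterising embeddings: injectivity, and preservation of the complements of relations. The key observation I would exploit is that membership in $\Ik(C)$ (rather than merely $\Hk(C)$) already encodes both of these conditions at the level of each context $C$, so they lift from contexts to $A$ by the same localisation argument used for preservation of positive relations.

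For the forward direction, given an embedding $h : A \embed B$, each restriction $h |_{C}$ with $C \in \Sk(A)$ is a partial isomorphism and so lies in $\Ik(C)$. The family $\{ h |_{C} \}_{C \in \Sk(A)}$ is a trivially flasque subpresheaf of $\Ik$, and hence by maximality is contained in $\Ikfl$. It therefore determines a global section of $\Ikfl$, picking out $h |_{C}$ at each $C \in \Sk(A)$.

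For the reverse direction, given a global section $\{ h_{C} \}_{C \in \Sk(A)}$ of $\Ikfl$, I would define $h : A \to B$ by $h(a) := h_{\{a\}}(a)$, as in Proposition~\ref{homglobsecsprop}. That proposition's argument already yields that $h$ is a homomorphism. To upgrade it to an embedding I would verify (i) preservation of complements, by the same localisation: given $\neg R^{A}(a_{1}, \ldots, a_{n})$, the context $C := \{ a_{1}, \ldots , a_{n} \}$ lies in $\Sk(A)$ since $k \geq n$, and $h_{C} \in \Ik(C)$ is a partial \emph{isomorphism}, so $\neg R^{B}(h_{C}(a_{1}), \ldots, h_{C}(a_{n}))$; compatibility gives $h_{C}(a_{i}) = h(a_{i})$, as before; and (ii) injectivity, by considering distinct $a, b \in A$ and the two-element context $C = \{a, b\} \in \Sk(A)$ (available whenever $k \geq 2$, as holds under the standard assumption $n \geq 2$): since $h_{C}$ is a partial isomorphism it is injective, so $h(a) = h_{C}(a) \neq h_{C}(b) = h(b)$.

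The two passages are mutually inverse: starting from an embedding $h$ and re-extracting gives $a \mapsto (h |_{\{a\}})(a) = h(a)$, while starting from a global section and restricting the reconstructed embedding to each $C$ recovers $h_{C}$ by compatibility of the section at the singletons together with the partial-isomorphism axioms. The only real obstacle is the injectivity step, which is what genuinely distinguishes this proof from the earlier one for $\Hk$; it requires that two-element subsets be visible to the strategy, i.e.\ $k \geq 2$, and this is automatic under the fixed assumption $k \geq n$ whenever $\sg$ contains a relation of arity at least two.
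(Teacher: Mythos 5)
The paper presents this proposition without proof, merely noting it as ``the following analogue'' of Proposition~\ref{homglobsecsprop}; your proof is precisely the intended elaboration, and it is correct. You rightly observe that the forward direction goes through verbatim once one notices that restrictions of an embedding are partial isomorphisms, and that the reverse direction needs only two extra checks beyond the homomorphism case: reflection of relations (handled by the same localisation to an $n$-element context, now exploiting the isomorphism rather than homomorphism property of $h_C$) and injectivity. Your flag on injectivity is a genuine and worthwhile observation: the argument needs a two-element context $\{a,b\}$ to be visible, i.e.\ $k \geq 2$, which the paper's fixed assumption $k \geq n$ does not guarantee when $\sg$ is purely unary. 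In that degenerate case the statement can in fact fail (take $A$ two-element with no relations, $B$ one-element; $\Ikfl$ has a global section but there is no embedding), so the hypothesis should really be read as $k \geq \max(n,2)$ — a standing convention for pebble games that the paper does not spell out. Apart from this, the mutual-inverse verification is correct; note that recovering $h_C$ from the reconstructed $h$ needs only compatibility at singletons, not the partial-isomorphism axioms, so that clause in your final paragraph can be trimmed.
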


Note that $\Ikfl$ can be computed in polynomial time, by an algorithm which is a minor variation of the local consistency algorithm which produces $\Hk^{\fl}$.
We can see a non-empty flasque subpresheaf of $\Ik$ as a winning Duplicator strategy for the forth-only $k$-pebble game in which the winning condition is the partial isomorphism condition rather than the partial homomorphism condition.

This leads to the following correspondence with the logical preorder induced by \emph{existential logic} $\ELk$. This is the $k$-variable logic which allows negation on atomic formulas, as well as conjunction, disjunction, and existential quantification.
For structures $A$, $B$, $A \ELkpreord B$ if every sentence of $\ELk$ satisfied by $A$ is also satisfied by $B$.
\begin{proposition}
For finite structures $A$, $B$, the following are equivalent:
\begin{enumerate}
\item $\Ikfl \neq \es$
\item $A \ELkpreord B$.
\end{enumerate}
\end{proposition}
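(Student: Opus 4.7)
The plan is to prove each direction separately, following the familiar correspondence between $k$-variable logics and $k$-pebble games. Call a partial isomorphism $p : C \to B$ \emph{$\ELk$-preserving} if for every $\ELk$-formula $\phi(\vec{x})$ whose free variables can be matched to elements of $C$, $A \models \phi[\vec{a}]$ entails $B \models \phi[p(\vec{a})]$ for the corresponding tuple $\vec{a}$.

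For $(1) \Rightarrow (2)$, I would prove by induction on $\phi \in \ELk$ that every $p \in \Ikfl(C)$ is $\ELk$-preserving. Atomic and negated atomic cases are immediate because $p$ is a partial isomorphism, and the Boolean cases are routine. For $\phi = \exists x_i.\,\psi$, the key point is that $x_i$ is not free in $\phi$, so the tuple $\vec{a}$ lives in a set $C$ with $|C| \leq k-1$; given a witness $a \in A$ for $\phi$, flasqueness of $\Ikfl$ extends $p$ to some $p' \in \Ikfl(C \cup \{a\})$, and the induction hypothesis on $\psi$ shows that $p'(a)$ witnesses $\phi$ in $B$. Specialising to sentences, with $C = \es$ and $p$ the empty map (which lies in $\Ikfl(\es)$ because $\Ikfl$ is a non-empty subpresheaf of $\Ik$), yields $A \ELkpreord B$.

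For $(2) \Rightarrow (1)$, I would define $\Sh(C) := \{\, p \in \Ik(C) \mid p \text{ is } \ELk\text{-preserving}\,\}$ and verify that $\Sh$ is a non-empty flasque subpresheaf of $\Ik$; since $\Ikfl$ is the largest such, $\Sh \inc \Ikfl$ and in particular $\Ikfl \neq \es$. Non-emptiness of $\Sh(\es)$ is precisely the hypothesis $A \ELkpreord B$, and closure under restriction is immediate. The substantive step is flasqueness: given $p \in \Sh(C)$ with $|C| < k$ and $a \in A \setminus C$ (the case $a \in C$ being trivial), one needs $b \in B$ making $p \cup \{(a,b)\}$ both a partial isomorphism and $\ELk$-preserving. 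If no such $b$ exists, then for each $b \in B$ there is an $\ELk$-formula $\psi_b(\vec{x}, y)$ witnessing the failure — absorbing the relevant atomic and negated atomic formulas for partial-isomorphism extension, including inequalities $y \neq x_i$ to avoid collisions with $p(C)$, as well as any $\ELk$-witness to loss of preservation — with $A \models \psi_b[\vec{a}, a]$ but $B \not\models \psi_b[p(\vec{a}), b]$. Finiteness of $B$ then makes $\Psi(\vec{x}) := \exists y.\,\bigwedge_{b \in B} \psi_b(\vec{x}, y)$ a legitimate $\ELk$-formula on at most $|C|+1 \leq k$ variables, satisfied in $A$ at $\vec{a}$ (witnessed by $a$) but not in $B$ at $p(\vec{a})$, contradicting $p \in \Sh(C)$.

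The main obstacle is this flasqueness argument: it combines the extension of a partial isomorphism with a finite-$B$ compactness trick that consolidates local failures into a single $\ELk$-formula. Finiteness of $B$ is essential, and the variable bookkeeping (ensuring $\Psi$ stays within the $k$-variable fragment) must be done with care, but otherwise the pattern is standard in finite model theory.
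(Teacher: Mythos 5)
The paper does not actually supply a proof of this proposition (nor of its siblings for $\Lk$ and $\Lck$): it is stated as a direct consequence of the remark that a non-empty flasque subpresheaf of $\Ik$ is a Duplicator winning strategy for the forth-only $k$-pebble game with the partial-isomorphism winning condition, and the well-known correspondence between that game and the preorder $\ELkpreord$. Your proposal unfolds exactly that correspondence directly in model-theoretic terms rather than by citation, so it takes essentially the intended route and is correct in its key ideas.

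Two small points of care are worth flagging, both of which you gesture at but do not pin down. In the $(1)\Rightarrow(2)$ induction on $\phi = \exists x_i.\psi$ you say ``the tuple $\vec{a}$ lives in a set $C$ with $|C|\leq k-1$''; as literally stated, the induction hypothesis quantifies over \emph{all} $p \in \Ikfl(C)$, including $C$ of size $k$, for which flasqueness gives no extension. The clean repair is to first pass to $p|_{C'}$ where $C'$ is the set of entries of $\vec{a}$ (which is legitimate because $\Ikfl$ is closed under restriction), and only then extend; this is the same restrict-then-extend move that makes the pebble-game argument go through. In the $(2)\Rightarrow(1)$ compactness step, the conjunction $\bigwedge_{b\in B}\psi_b$ stays inside $\ELk$ only after aligning the free variables of each $\psi_b$ to the canonical $x_1,\ldots,x_m,y$ and then renaming bound variables so they all draw from the remaining $k-(m+1)$ names; you say ``the variable bookkeeping must be done with care,'' and indeed this renaming is the precise content of that care. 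With those two repairs made explicit, the argument is complete, and finiteness of $B$ (not $A$) is, as you observe, where it is genuinely used.
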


\subsection{The full $k$-variable fragment $\Lk$}
To express back-and-forth equivalences, we use the fact that the inverse operation on partial isomorphisms lifts to $\Ik$ and its subpresheaves. 
Given a subpresheaf $\Sh$ of $\Ik = \IkAB$, we define $\Shdag$ to be the  subpresheaf of $\IkBA$ on $\Sk(B)$ such that, for $D \in \Sk(B)$, $\Shdag(D) = \{ t \in \IkBA(D) \mid t^{-1} \in \Sh(\im t) \}$.
The fact that restriction is well-defined in $\Shdag$ follows since, if $s \in \Ik(C)$, $D = \im s$, and $D' \subseteq D$,
then $s^{-1} |_{D'} = (s |_{C'})^{-1}$, where $C' = \im \, (s^{-1} |_{D'})$.

\begin{proposition}
For finite structures $A$, $B$, the following are equivalent:
\begin{enumerate}
\item There is a non-empty subpresheaf $\Sh$ of $\Ik$ such that both $\Sh$ and $\Shdag$ are flasque.
\item $A \eqLk B$.
\end{enumerate}
\end{proposition}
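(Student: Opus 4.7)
The plan is to reduce to the classical back-and-forth characterization of $\eqLk$ for finite structures: $A \eqLk B$ iff there exists a non-empty family $I$ of partial isomorphisms from $A$ to $B$, each of size $\leq k$, closed under restriction and satisfying both the \emph{forth} condition (for each $s \in I$ with domain $C$ of size $|C| < k$ and each $a \in A$, some extension $s' \supseteq s$ lies in $I$ with $a$ in its domain) and the symmetric \emph{back} condition on the $B$ side. This is the $k$-variable analogue of the Ehrenfeucht--Fra\"{\i}ss\'{e} theorem, and plays the same role for $\Lk$ that Kolaitis--Vardi's game characterization plays for strong $k$-consistency.

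My first step is to set up a dictionary between such back-and-forth systems and the data in (1). Starting from $\Sh$ as in (1), I form $I_{\Sh} := \bigcup_{C \in \Sk(A)} \Sh(C)$, which is non-empty and closed under restriction because $\Sh$ is a non-empty subpresheaf of $\Ik$. Flasqueness of $\Sh$ is literally the forth condition. For back, given $s \in \Sh(C)$ with $|C| < k$ and $b \in B$, I take $t = s^{-1} \in \Shdag(\im s)$, observe that $|\im s| = |C| < k$ by injectivity of $s$, apply flasqueness of $\Shdag$ to produce $t' \in \Shdag(\im s \cup \{b\})$ extending $t$, and set $s' := (t')^{-1}$. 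By the definition of $\Shdag$, $s' \in \Sh(\im t')$; by construction it extends $s$ and has $b$ in its image. Conversely, given a witnessing system $I$ for $A \eqLk B$ (for instance the canonical system of partial isomorphisms preserving all $\Lk$-types), I set $\Sh(C) := \{ s \in \Ik(C) \mid s \in I \}$, check that closure of $I$ under restriction makes $\Sh$ a subpresheaf of $\Ik$, and check that forth and back for $I$ yield flasqueness of $\Sh$ and of $\Shdag$ respectively.

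I do not expect any genuine obstacle: the statement is essentially a repackaging, and both implications fall out once the dictionary is in place. The one point requiring care is unwinding the $(-)^{\dagger}$ construction to see that flasqueness of $\Shdag$ cleanly matches the back condition; this is why I compute $t' \in \Shdag(\im s \cup \{b\})$ and then pass back through inversion. The fact that flasqueness of a presheaf on $\Sk(A)$ is vacuous at $k$-element sets exactly matches the constraint that forth and back apply only when a free pebble remains available, so no special treatment is needed at the top of the poset.
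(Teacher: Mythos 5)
The paper states this proposition without proof; the preceding sentence (``To express back-and-forth equivalences, we use the fact that the inverse operation on partial isomorphisms lifts to $\Ik$\dots'') signals that the intended argument is precisely the reduction to the Barwise/Immerman back-and-forth characterization of $\eqLk$, which is what you carry out. Your dictionary is correct on both sides: flasqueness of $\Sh$ is the forth condition, and your unwinding of $\Shdag$ via $t = s^{-1}$, $t' \in \Shdag(\im s \cup \{b\})$, $s' = (t')^{-1}$ correctly recovers the back condition, using injectivity to get $|\im s| = |C| < k$ and the definition of $\Shdag$ to land $s'$ back in $\Sh$. The one point worth making explicit, which you gesture at by invoking the canonical system of partial isomorphisms preserving $\Lk$-types, is that the back-and-forth witness must be taken (or massaged) to be closed under restriction, so that the one-element extensions needed for flasqueness of $\Shdag$ are actually available; the canonical $\Lk$-type-preserving system has this property automatically, so your appeal to it resolves the issue.
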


\subsection{The $k$-variable fragment with counting quantifiers $\Lck$}\label{cofpsec}

We recall Hella's bijection game \cite{Hella1996}: at each round, Duplicator specifies a bijection $\beta$ from $A$ to $B$, and then Spoiler specifies 
$a \in A$. The current position is extended with $(a, \beta(a))$. The winning condition is the partial isomorphism condition.
Note that allowing Spoiler to choose $b \in B$ would make no difference, since they could have chosen $a = \beta^{-1}(b)$ with the same effect.

Given a flasque subpresheaf $\Sh$ of $\Ik$, the additional condition corresponding to the bijection game is that, for each $s \in \Sh(C)$ with $|C| < k$, there is an assignment $a \mapsto s_a$ of $s_a \in \Sh(C \cup \{a\})$ with $s_a |_C = s$, such that the relation $\{ (a,s_a(a)) \mid a \in A \}$ is a bijection from $A$ to $B$.  We write $\cotest(\Sh,s)$ if this condition is satisfied for $s$ in $\Sh$.
As observed in \cite{AOC2021}, for each such $s$ this condition can be formulated as a perfect matching problem on a bipartite graph of size polynomial in $|A|$ and $|B|$, and hence can be checked in polynomial time.

We lift this test to the presheaf $\Sh$, and define $\Shcotest \inc \Sh$ by $\Shcotest(C) = \{ s \in \Sh(C) \mid \cotest(\Sh,s) \}$.
Since there are only polynomially many $s$ to be checked, $\Shcotest$ can be computed in polynomial time.

We can then define an iterative process
\[ \Ik \linc \Ik^{\fl} \linc \Ik^{\fl\cotest\fl} \linc \cdots \linc \Ik^{\fl(\cotest\fl)^{m}} \linc \cdots \]
This converges to a fixpoint $\Shcofp$ in polynomially many iterations. 
\begin{proposition}
For finite structures $A$, $B$, the following are equivalent:
\begin{enumerate}
\item $\Shcofp \neq \es$.
\item $A \eqLck B$.
\end{enumerate}
\end{proposition}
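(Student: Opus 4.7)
The plan is to mediate the equivalence via Hella's classical characterisation of $\eqLck$ by his $k$-pebble bijection game \cite{Hella1996}: $A \eqLck B$ iff Duplicator has a winning strategy in this game. The task then reduces to identifying $\Shcofp \neq \es$ with the existence of a positional winning Duplicator strategy.

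I would begin by fixing the presheaf dictionary for positional strategies, extending the correspondence of Proposition~\ref{pebbprop}. A positional winning Duplicator strategy corresponds to a non-empty subpresheaf $W \subseteq \Ik$ which is flasque and for which $\cotest(W, s)$ holds for every $s \in W(C)$ with $|C| < k$; indeed, flasqueness together with $\cotest$ express exactly that at each position Duplicator has a bijection $\beta : A \to B$ whose induced extensions $s \cup \{(a, \beta(a))\}$ all remain partial isomorphisms in the strategy. Positions of maximal size $k$ in the game are handled by restriction in the presheaf formalism, corresponding to pebble release, after which flasqueness and $\cotest$ take over, just as in the proof of Proposition~\ref{pebbprop}.

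With this dictionary in hand, (1)~$\Rightarrow$~(2) is direct: at the fixpoint, $\Shcofp = (\Shcofp)^{\fl\cotest\fl}$, so $\Shcofp$ is flasque and every sub-maximal $s \in \Shcofp$ satisfies $\cotest(\Shcofp, s)$. If non-empty, it supplies a positional winning Duplicator strategy, hence $A \eqLck B$. For (2)~$\Rightarrow$~(1), given $W$ as above, I would show by induction on $m$ that $W$ is contained in the $m$'th stage $\Ik^{\fl(\cotest\fl)^{m}}$ of the iteration. The base case $W \subseteq \Ik$ is trivial. For the inductive step, writing $\Sh$ for the current stage, flasqueness of $W$ gives $W \subseteq \Sh^\fl$ since $\Sh^\fl$ is the largest flasque subpresheaf of $\Sh$; next, monotonicity of $\cotest$---if $W \subseteq \Sh'$ and $\cotest(W, s)$ is witnessed by a bijection $\beta$ with all extensions in $W$, then the same $\beta$ witnesses $\cotest(\Sh', s)$---yields $W \subseteq \Sh^{\fl\cotest}$, and one more flasque step gives $W \subseteq \Sh^{\fl\cotest\fl}$, the next stage. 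Passing to the fixpoint, $W \subseteq \Shcofp$, so $\Shcofp \neq \es$.

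The main obstacle I anticipate is the careful bookkeeping needed to translate Hella's game, with its explicit pebbles, into the positional presheaf formalism, in particular the interplay between pebble release, restriction, flasqueness and $\cotest$ at maximal contexts. This parallels the subtlety already addressed in Proposition~\ref{pebbprop}, and I expect no new mathematical difficulty beyond that once the dictionary is firmly in place.
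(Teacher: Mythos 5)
The paper states this proposition without proof, so there is no in-text argument to compare against, but your route is the natural one suggested by the paper's framework (it is the direct analogue, for Hella's bijection game, of the strategy/presheaf dictionary of Proposition~\ref{pebbprop}), and it is essentially correct. Two small remarks. First, in your dictionary you list flasqueness and the $\cotest$ condition as separate requirements on $W$; in fact $\cotest(W,s)$ for all $s \in W(C)$ with $|C|<k$, together with the subpresheaf (down-closure) condition, already forces $W$ to be flasque, since the chosen bijection supplies a one-element extension for every $a$, and flasqueness then follows by iterating one-element extensions. This is harmless but worth flagging because it clarifies that the game-theoretic content lives in $\cotest$ plus down-closure. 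Second, the direction $(2)\Rightarrow(1)$ implicitly uses that a Duplicator win in Hella's game can be taken to be positional (i.e., history-free); this is standard for such safety games but should be stated, since it is exactly what licenses passing from the logical equivalence $A \eqLck B$ to the existence of the subpresheaf $W$. With those two points made explicit, your induction that $W$ is contained in every stage of the iteration is sound: $W \subseteq \Sh^{\fl}$ because $\Sh^{\fl}$ is the largest flasque subpresheaf, $W \subseteq \Sh^{\fl\cotest}$ by monotonicity of $\cotest$ in its first argument, and one more coflasquification step closes the loop, so $W \subseteq \Shcofp$ and the fixpoint is non-empty.
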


By a well-known result \cite{cai1992optimal}, the logical equivalence $\eqLck$ corresponds to equivalence at level $k-1$ according to the Weisfeiler-Leman algorithm, a widely studied approximation algorithm for graph and structure isomorphism \cite{kiefer2020weisfeiler}.
Thus the above yields an alternative polynomial-time algorithm for computing Weisfeiler-Leman equivalences.

\section{Cohomological refinement of logical equivalences}

We can proceed analogously to the introduction of a cohomological refinement of local consistency in Section~\ref{cohomkconsec}. Such refinements are possible for all the logical equivalences studied in the previous Section.
We will focus on the equivalences $\eqLck$, and the corresponding Weisfeiler-Leman equivalences.

The cohomological reduction  $\Sh \mapsto \Shch$ is directly applicable to flasque subpresheaves of $\Ik$. 
However, an additional subtlety which arises in this case is that $\Ztest(\Sh,s)$ need not imply $\Ztest(\Shdag, s^{-1})$ in general. We are thus led to the following symmetrization of cohomological reduction:
$\Sh \mapsto \Shsch := \Sh^{\ch \dagger \ch \dagger}$.

Combining this with the counting reduction fixpoint  $\Sh \mapsto \Sh^{\cofp}$ from Section~\ref{cofpsec} leads to an iterative process
\[ \Ik \linc \Ik^{\cofp} \linc \Ik^{\cofp\sch\cofp} \linc \cdots \linc \Ik^{\cofp(\sch\cofp)^{m}} \linc \cdots \]
This converges to a fixpoint $\Shcst$ in polynomially many iterations. Since each of the operations $\Sh \mapsto \Shsch$ and $\Sh \mapsto \Sh^{\cofp}$ is computable in polynomial time, so is  $\Shcst$.

We define $A \eqZ B$ if $\Shcst \neq \es$. We can view this as a polynomial time approximation to structure isomorphism.
The soundness of this algorithm is expressed as follows.
\begin{proposition}
For all finite structures $A$, $B$:
\[ A \cong B \IMP A \eqZ B \IMP A \eqLck B . \]
\end{proposition}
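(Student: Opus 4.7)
The plan is to address the two implications separately. The second implication $A \eqZ B \IMP A \eqLck B$ is immediate from the construction of $\Shcst$: the defining iteration begins with the step $\Ik \linc \Ik^{\cofp} = \Shcofp$, and every subsequent operation $(-)^{\sch\cofp}$ produces a subpresheaf of the previous stage. Hence $\Shcst \inc \Shcofp$, so $\Shcst \neq \es$ forces $\Shcofp \neq \es$, which by the characterization at the end of Section~\ref{cofpsec} is equivalent to $A \eqLck B$.

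For the first implication $A \cong B \IMP A \eqZ B$, the idea is to exhibit a nonempty witness inside $\Shcst$. Given an isomorphism $\phi : A \to B$, define the ``graph'' subpresheaf $\Sh_\phi$ of $\Ik$ by
\[ \Sh_\phi(C) := \{\, \phi |_C \,\}, \qquad C \in \Sk(A), \]
with restriction inherited from $\Ik$. Each $\phi|_C$ is a partial isomorphism because $\phi$ is, so $\Sh_\phi$ is a well-defined nonempty subpresheaf of $\Ik$. The goal is to show $\Sh_\phi \inc \Shcst$.

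The central claim is that $\Sh_\phi$ is a fixed point of each constituent operation $(-)^{\fl}$, $(-)^{\dagger}$, $(-)^{\cotest}$, and $(-)^{\ch}$. Flasqueness is immediate since $\phi|_{C'}$ extends $\phi|_C$ whenever $C \subseteq C'$. The dagger gives $\Sh_\phi^{\dagger} = \Sh_{\phi^{-1}}$, the analogous singleton presheaf on $\Sk(B)$ for the isomorphism $\phi^{-1}$; applying the argument below for $(-)^{\ch}$ to both $\phi$ and $\phi^{-1}$ then shows that the symmetrized reduction $\sch = \ch\dagger\ch\dagger$ closes back to $\Sh_\phi$. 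For $\cotest(\Sh_\phi, \phi|_C)$ with $|C| < k$, the required bijection $A \to B$ is furnished by the forced assignment $a \mapsto \phi|_{C \cup \{a\}}$, which induces $\{(a, \phi(a)) : a \in A\}$, a bijection because $\phi$ is. For $\Ztest(\Sh_\phi, \phi|_C)$ with $C \in \Mk(A)$, the family $\{\, 1 \cdot \phi|_{C'} \,\}_{C' \in \Mk(A)}$ is a $\ZZ$-compatible extension, since compatibility reduces to the triviality $\phi|_{C \cap C'} = \phi|_{C \cap C'}$.

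Combined with monotonicity of each constituent operation---if $\Sh \inc \Sh'$, then $\Sh^{\fl} \inc (\Sh')^{\fl}$ and similarly for $(-)^{\dagger}$, $(-)^{\cotest}$, and $(-)^{\ch}$---an induction on the stage of the iteration shows that $\Sh_\phi$ is contained in every successive presheaf $\Sh_m$, and therefore $\Sh_\phi \inc \Shcst$, which is nonempty. The main point requiring care is the monotonicity of $(-)^{\ch}$, which rests on the observation already made in the earlier proposition comparing $\Ztest(\Sh, s)$ with $\Ztest(\Sh', s)$: enlarging the ambient presheaf only relaxes the linear equations that must have a common solution, so a $\ZZ$-compatible extension in the smaller presheaf is \emph{a fortiori} one in the larger.
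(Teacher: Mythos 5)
The paper states this proposition without proof, treating it as routine soundness; your proposal supplies the natural argument and it is correct. The key observation that the ``diagonal'' singleton presheaf $\Sh_\phi$ of restrictions of an isomorphism $\phi$ is simultaneously flasque, $\dagger$-symmetric, $\cotest$-invariant, and $\ch$-invariant, combined with monotonicity of each reduction operator, is exactly what is needed: it places $\Sh_\phi$ inside every stage of the iteration and hence inside $\Shcst$. This mirrors the paper's own proof strategy for Proposition~\ref{chainimpprop}, where a homomorphism is shown to generate a $\ZZ$-compatible family surviving the iteration, here strengthened to handle the additional $\dagger$- and $\cotest$-steps. The second implication is indeed immediate from $\Shcst \inc \Ik^{\cofp} = \Shcofp$ and the characterization of $\eqLck$ in Section~\ref{cofpsec}. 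One small point worth being explicit about: the monotonicity of $\ch$ you invoke follows from the general fact that $\Sh' \inc \Sh$ forces the linear system for $\Ztest(\Sh',s)$ to be a superset of that for $\Ztest(\Sh,s)$ (the paper states this only for flasque subpresheaves, but the argument does not use flasqueness); you use it for possibly non-flasque intermediate stages, so it is good that the argument is robust in this respect.
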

In \cite{AOC2021}, it is shown that $\eqZ$ strictly refines $\eqLck$. Moreover, Proposition~\ref{linearprop} is leveraged to show that $\eqZ$ is discriminating enough to defeat two important families of counter-examples: the CFI construction used in \cite{cai1992optimal} to show that $\Lck$ is not strong enough to characterise polynomial time, and the constructions in \cite{lichter2021separating,dawar2021limitations} which are used to show similar results for linear algebraic extensions of $\Lck$.

\section{A meta-algorithm for presheaves}

It is clear that the various algorithms we have described for local consistency, logical equivalences, and their cohomological refinements have a common pattern.
This pattern has two  ingredients:
\begin{enumerate}
\item An \emph{initial over-approximation} $\Sh_0$.
\item a \emph{local predicate} $\vphi$.
\end{enumerate}
We can use $\vphi$ to define a \emph{deflationary operator} $J = \Jphi$ on the sub-presheaves of $\Sh_0$.

As a general setting, we take presheaves $\Pshv{P}$ on a poset $P$, ordered by sub-presheaf inclusion. For any choice of initial presheaf $\Sh_0$, the downset ${\downarrow} \Sh_0$ forms a complete lattice $L$.\footnote{Equivalently, this is the subobject lattice $\Sub(\Sh_0)$.}
A deflationary operator on $L$ is a monotone function $J : L \to L$ such that, for all $\Sh \in L$, $J \Sh \inc \Sh$. By the standard Tarski fixpoint theorem, this has a greatest fixpoint $J^* = \bigcup \{ \Sh \mid \Sh \inc J \Sh \}$.
To compute this greatest fixpoint, if $L$ satisfies the Descending Chain Condition, \ie there are no infinite descending chains, then we can compute
\[ \Sh_0 \linc J \Sh_0 \linc J^2 \Sh_0 \linc \cdots \]
which will converge after finitely many steps to $J^*$.

Now we specialize to subpresheaves of $\HkAB$. We define $|\Sh| := \sum_{C \in \Sk(A)} |\Sh(C)|$. Since $|\HkAB| \leq |A|^k|B|^k$, the number of iterations needed to converge to $J^*$ is polynomially bounded in $|A|$, $|B|$.
We consider deflationary operators of the form $J = J_{\vphi}$, where $\vphi(\Sh,s)$ is a predicate on local sections.
We require that $\vphi$ be monotone in the following sense: if $\Sh \inc \Sh'$, then for $C \in \Sk(A)$, $s \in \Sh(C)$, $\vphi(\Sh,s) \Rightarrow \vphi(\Sh',s)$.
We define $J_{\vphi}( \Sh)(C) := \{ s \in \Sh(C) \mid \vphi(\Sh,s) \}$. Since $|\Sh|$ is polynomially bounded in $|A|$, $|B|$, $J_{\vphi}$ makes polynomially many calls of $\vphi$.

We recover all the algorithms considered previously by making suitable choices for the predicate $\vphi$.
In most cases, this can be defined by $\vphi(\Sh, s) \; \equiv \; \bar{\Sh}, s \models \psi$, where $\bar{\Sh}$ is the relational structure with universe $\sum_{C \in \Sk(A)} \Sh(C)$, and the relation 
\[ E^{\bar{\Sh}}(s,t,a,b) \equiv (t = s \cup \{ (a,b) \});  \]
while $\psi$ is a first-order formula over this vocabulary.\footnote{Strictly speaking, we should consider the many-sorted structure with sorts for $A$ and $B$.}
The counting predicate $\cotest$ and the cohomological reduction predicate $\Ztest$ require stronger logical resources.
In all cases, if $\vphi$ is computable in polynomial time, so is the overall algorithm to compute $J^*$.

\section{Further remarks and questions}

Firstly, we observe a down-closure property of cohomological $k$-consistency.
\begin{proposition}
If $k \leq k'$, then $A \toZkp B \IMP A \toZk B$.
\end{proposition}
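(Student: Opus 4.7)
Write $\Shst$ for the fixpoint of the cohomological $k$-consistency iteration from Section~\ref{cohomkconsec}, and, for this proof, let $\Sigma$ denote the analogous fixpoint built at parameter $k'$ throughout (on the poset of contexts of size $\leq k'$, with cover by the size-$k'$ contexts). The premise gives $\Sigma \neq \es$; we must show $\Shst \neq \es$. The strategy is to restrict $\Sigma$ down to the subposet $\Sk(A)$, check that the restriction is a flasque subpresheaf of $\Hk$ which is its own cohomological reduction, and conclude by the greatest-fixpoint characterisation of $\Shst$.

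Set $\Sh(C) := \Sigma(C)$ for $C \in \Sk(A)$. Since $k \leq k'$, $\Sk(A)$ sits inside the $k'$-poset and $\hom(C,B)$ is the same set in both settings, so $\Sh$ is a well-defined subpresheaf of $\Hk$. Flasqueness transfers directly: if $|C| < k$ and $a \in A$ then $C \cup \{a\} \in \Sk(A)$, and $\Sigma$ supplies the required lift. Nonemptiness of $\Sh$ follows from the local-to-global propagation proposition recalled in Section~\ref{cohomkconsec}: a nonempty flasque presheaf has nonempty sections at every context, so $\Sigma(C) \neq \es$ for every $C \in \Sk(A)$.

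The main work is to verify $\Sh^{\ch} = \Sh$, i.e., that $\Ztest(\Sh, s)$ holds for every $s \in \Sh(C)$ with $C \in \Mk(A)$. Fix such an $s$. Use flasqueness of $\Sigma$ to extend $s$ to $\tilde s \in \Sigma(D)$ for some $D \supseteq C$ of size $k'$. Since $\Sigma$ equals its own cohomological reduction at level $k'$, $\Ztest(\Sigma, \tilde s)$ holds and supplies a $\ZZ$-compatible family $\{\alpha_E\}_E$ of size-$k'$ contexts in $\Sigma$ with $\alpha_D = 1 \cdot \tilde s$. By Proposition~\ref{gencompfamprop} applied at parameter $k'$, this lifts uniquely to a global section $\alpha : \One \natarrow \FZ \Sigma$ over the $k'$-poset. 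Restricting $\alpha$ along the subposet inclusion of $\Sk(A)$ into the $k'$-poset produces a global section of $\FZ \Sh$, which by Proposition~\ref{gencompfamprop} at parameter $k$ corresponds to a $\ZZ$-compatible family $\{\beta_E\}_{E \in \Mk(A)}$ in $\Sh$. Naturality at the inclusion $C \subseteq D$ gives $\beta_C = \alpha_D |_C = (1 \cdot \tilde s)|_C = 1 \cdot s$, so $s$ has a $\ZZ$-compatible extension in $\Sh$, establishing $\Ztest(\Sh, s)$.

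With $\Sh$ flasque and satisfying $\Sh^{\ch} = \Sh$, we have $\Sh = \Sh^{\ch\fl}$, so $\Sh$ is a fixpoint of the deflationary operator whose iteration defines $\Shst$. The greatest-fixpoint characterisation $J^{*} = \bigcup \{ \Sh' : \Sh' \inc J \Sh' \}$ from the meta-algorithm section then gives $\Sh \inc \Shst$, and since $\Sh$ is nonempty, $\Shst \neq \es$, i.e., $A \toZk B$. The principal obstacle is the cohomological-closure step: it works cleanly because Proposition~\ref{gencompfamprop} identifies $\ZZ$-compatible families with global sections, letting us transport the $\ZZ$-witness between the two different covers simply by restricting global sections along a subposet inclusion.
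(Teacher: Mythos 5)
Your proof is correct. The central move — identify $\ZZ$-compatible families with global sections via Proposition~\ref{gencompfamprop}, restrict a global section along the inclusion $\Sk(A) \hookrightarrow \Skp(A)$, and read off a $k$-compatible family — is exactly the paper's key step, and your verification that $\beta_C = 1\cdot s$ by naturality at $C \subseteq D$ is right.

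Where you go further than the paper is in the wrapper. The paper's proof is stated only in terms of $\Sbar$ and $\Sbarp$: it shows that a $\ZZ$-compatible extension of $s' \in \Sbarp(C'_0)$ restricts to one of $s = s'|_{C_0}$ in $\Sbar$, which is really a one-step transfer $\Ztest(\Sbarp, s') \Rightarrow \Ztest(\Sbar, s)$, and leaves the passage from this to the full fixpoint statement implicit. You instead take the restriction $\Sh := \Sigma|_{\Sk(A)}$ of the \emph{fixpoint} $\Sigma$ at level $k'$, check it is flasque, nonempty, and satisfies $\Sh^{\ch} = \Sh$ (hence is a fixpoint of $\ch\fl$), and then invoke the greatest-postfixpoint characterisation to conclude $\Sh \inc \Shst$. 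This is a cleaner and genuinely complete argument: rather than pushing a one-step restriction through the iteration by induction, you restrict the limiting object once and verify it already certifies membership in the fixpoint lattice at level $k$. The only small thing worth making explicit is that $\Shst$ really is the greatest fixpoint of $\ch\fl$ among flasque subpresheaves of $\Hk$ — which holds because $\Hk^{\fl} = \Sbar$ is the top of that lattice, $\ch\fl$ is monotone and deflationary, and the lattice is finite — but this is exactly the standard Tarski/descending-chain reasoning recalled in the meta-algorithm section, so the appeal is sound.
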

\begin{proof}
Suppose we have a $\ZZ$-compatible family $\{ \alpha_{C'} \}_{C' \in \Mkp(A)}$ extending $s' \in \Sbarp(C'_0)$. By  Proposition~\ref{gencompfamprop} we can extend this family to a global section of $\FZ \Sbarp$.
Using Proposition~\ref{gencompfamprop} again, the elements $\{ \alpha_{C} \}_{C \in \Mk(A)}$ at the $k$-element subsets will form a $\ZZ$-compatible family of $\Sbar$. Moreover, if $C_0$ is a $k$-element subset of $C'_0$, $\alpha_{C_0}$ will be $s = s' |_{C_0}$, so this family will extend $s$.
\end{proof}

We now ask which of the implications in Proposition~\ref{chainimpprop} can be reversed. Clearly, if $P \neq NP$, the first cannot be reversed, but there should be an explicit counter-example.
There are such counter-examples of false negatives for contextuality in \cite{DBLP:journals/corr/abs-1111-3620,DBLP:conf/csl/AbramskyBKLM15,caru2017cohomology}, but these are for specific values of $k$. Can we provide a family of counter-examples for all values of $k$?
We try to phrase this more precisely:
\begin{question}
If we fix an NP-complete template $B$, e.g. for 3-SAT, can we find a family of instances $\{ A_k \}$ such that, for all $k$:
\[ A_k \toZk B \AND A_k \nto B . \]
\end{question}

\bibliographystyle{amsplain}
\bibliography{bibfile}

\end{document}